\documentclass[sigconf, nonacm, pdfa]{acmart}

\usepackage{threeparttable}

\usepackage{url}
\usepackage{amsmath,amsfonts}
\usepackage[ruled,linesnumbered, vlined]{algorithm2e}
\usepackage{algorithmic}
\usepackage{graphicx}
\usepackage{booktabs}
\usepackage{textcomp}
\usepackage{mathrsfs}
\usepackage{blindtext}
\usepackage{amsthm}
\usepackage{multirow}
\usepackage{bigstrut}
\usepackage{array}
\usepackage{amsthm}
\usepackage{tabularx}
\usepackage{subfig}
\usepackage{enumitem}
\usepackage{multirow}

\usepackage{mathrsfs}
\usepackage{verbatim}
\usepackage{color}
\usepackage{multicol}

\usepackage[normalem]{ulem}
\newtheorem{myDef}{Definition}
\newtheorem{myPro}{Problem}
\newtheorem{myLem}{Lemma}
\newtheorem{theorem}{Theorem}
\newtheorem{definition}{Definition}

\usepackage[nice]{nicefrac}
\usepackage{diagbox}

\usepackage{bm}

\newcommand\vldbdoi{10.14778/3746405.3746433}
\newcommand\vldbpages{3134 - 3148}
\newcommand\vldbvolume{18}
\newcommand\vldbissue{9}
\newcommand\vldbyear{2025}
\newcommand\vldbauthors{Fei TENG, Haoyang LI and Lei CHEN}
\newcommand\vldbtitle{\shorttitle} 
\newcommand\vldbavailabilityurl{https://github.com/XinTT/LLMLog}
\newcommand\vldbpagestyle{empty}

\def\BibTeX{{\rm B\kern-.05em{\sc i\kern-.025em b}\kern-.08em
		T\kern-.1667em\lower.7ex\hbox{E}\kern-.125emX}}

\renewcommand{\proofname}{Proof Sketch}
\definecolor{mygray}{gray}{0.6}

\begin{document}
\title{LLMLog: Advanced Log Template Generation via LLM-driven Multi-Round Annotation}

\author{Fei TENG}
\affiliation{
  \institution{CSE, HKUST}
}
\email{fteng@connect.ust.hk}

\author{Haoyang LI$^*$}\thanks{*Corresponding Author}
\affiliation{
  \institution{Computing, PolyU}
}
\email{haoyang-comp.li@polyu.edu.hk}

\author{Lei CHEN}
\affiliation{
	\institution{DSA, HKUST \& HKUST (GZ)}
}
\email{leichen@cse.ust.hk}

\begin{abstract}
Modern computing systems, such as HDFS and Spark, produce vast quantities of logs that developers use for tasks like anomaly detection and error analysis. To simplify log analysis, template generation methods have been proposed to standardize log formats, transforming unstructured data into structured templates. Existing heuristic-based methods and neural network-based methods suffer from low accuracy problems due to the reliance on handcrafted heuristics or specific log patterns in training sets.
Recently, large language models (LLMs) have shown great potential in log template generation. However, they often struggle with ambiguous, complex, or highly specific log content, which can lead to errors in generating accurate templates. To address these challenges, we propose LLMLog, a multi-round annotation framework with adaptive in-context learning. We first propose an edit-distance-based similarity metric to evaluate log similarity. Then, we introduce a method to select the most informative $k$ unlabeled logs for annotation by considering both the representativeness of the logs and the confidence of LLM predictions. Additionally, we design an adaptive context selection strategy that adaptively selects labeled logs to ensure comprehensive keyword coverage for unlabeled logs. These labeled logs serve as the context for LLMs to better understand the unlabeled logs, thereby enhancing the accuracy of template generation. Extensive experiments on sixteen datasets demonstrate that LLMLog outperforms the state-of-the-art approaches.
\end{abstract}

\twocolumn

\maketitle

\pagestyle{\vldbpagestyle}
\begingroup\small\noindent\raggedright\textbf{PVLDB Reference Format:}\\
\vldbauthors. \vldbtitle. PVLDB, \vldbvolume(\vldbissue): \vldbpages, \vldbyear.\\
\href{https://doi.org/\vldbdoi}{doi:\vldbdoi}
\endgroup
\begingroup
\renewcommand\thefootnote{}\footnote{\noindent
This work is licensed under the Creative Commons BY-NC-ND 4.0 International License. Visit \url{https://creativecommons.org/licenses/by-nc-nd/4.0/} to view a copy of this license. For any use beyond those covered by this license, obtain permission by emailing \href{mailto:info@vldb.org}{info@vldb.org}. Copyright is held by the owner/author(s). Publication rights licensed to the VLDB Endowment. \\
\raggedright Proceedings of the VLDB Endowment, Vol. \vldbvolume, No. \vldbissue\ %
ISSN 2150-8097. \\
\href{https://doi.org/\vldbdoi}{doi:\vldbdoi} \\
}\addtocounter{footnote}{-1}\endgroup

\ifdefempty{\vldbavailabilityurl}{}{
\vspace{.3cm}
\begingroup\small\noindent\raggedright\textbf{PVLDB Artifact Availability:}\\
The source code, data, and/or other artifacts have been made available at \url{\vldbavailabilityurl}.
\endgroup
}

\section{Introduction} \label{sec:intro}
Modern computing systems, such  as HDFS and Spark, generate vast quantities of logs, which offer a wealth of information about system runtime behavior.
Developers can use the recordings to debug and maintain the computer system, including anomaly detection~\cite{vldbanomolydetection,vldblog2,anamoly1,anomaly2,anomaly3,wwwlog} 
and error analysis~\cite{analysis1,analysis2,analysis3}. 
However, the massive volume of logs causes the difficulty of system developers and maintain staffs to detect the anomalies and track errors.
To facilitate easier understanding and analysis, motivated by pattern mining approaches~\cite{vldbpattern1,vldbpattern2,sigmodpattern1}, researchers have proposed generating templates for these massive logs.
For example, as shown in Figure 1, template generation creates structured formats to consistently standardize the logs, making them easier to parse, analyze, and maintain.
This process extracts structured patterns from unstructured log data, which transforms raw logs into more organized formats. 
As a result, developers can quickly identify fields with anomalies or errors, making it easier to debug issues and maintain system health.
The diversity of logs within a system, such as configuration, user, and error logs, often maps to multiple templates rather than a single one, complicating their effective template generation. This challenge is further exacerbated by the ever-growing volume of logs and their variability over time due to system updates.

\begin{figure}[]
	\color{black}
	\centering
	\includegraphics[width=1\columnwidth]{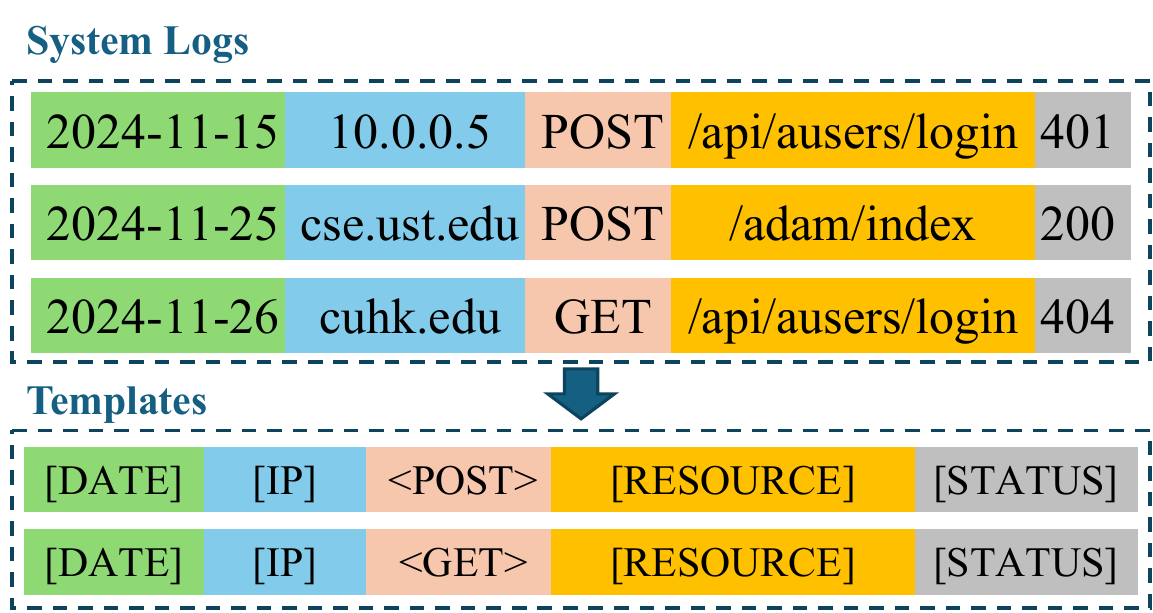}
	\caption{Template generation from system logs.
		As shown in the figure,we can generate two templates for the three logs from HTTP. Templates help structure the logs by assigning types or categories to each word in the logs.}
	\label{fig:template_generation}
\end{figure}

Depending on the technique of generating templates for logs, current approaches can be categorized into three types, i.e., heuristic-based methods~\cite{drain,swisslog,logmine,shio,logsig}, neural network-based methods	~\cite{logppt,nulog}, and LLM-based methods~\cite{Divlog}.
Firstly, heuristic-based methods~\cite{drain,swisslog,logmine,shio,logsig} separate user logs into different clusters using handcrafted rules or heuristics, such as assuming all logs are of equal length if they have the same template. However, such predefined heuristics or rules cannot completely match patterns of logs. 
Consequently, they lack the flexibility and adaptability to standardize diverse logs.
Secondly, neural-network-based approaches train neural networks to predict the type of each word in the logs, 
such as \texttt{[DATE]} and \texttt{[IP]} in Figure 1. 
However, the success of supervised approaches relies on numerous annotated labels to train the neural networks, which requires expensive human effort~\cite{Divlog,logppt}.

Recently, large language models (LLMs)~\cite{llm1,llm2,llm3,llm4,llm5,llm6} pre-trained on diverse corpora, such as text and code, have shown exceptional abilities in understanding text, achieving significant success in natural language processing tasks~\cite{nl1,nl2,llm4}, such as question answering~\cite{qa1,qa2,qa3}. 
Therefore, researchers have explored using LLMs to comprehend text in logs and generate templates for them.
Typically, they provide each user log along with candidate word labels to the LLMs, instructing them to label each word in the log to create its template. However, due to the presence of complex, specific, and ambiguous words in logs (e.g., a username  \texttt{asjaks001}), LLMs may struggle to interpret these words accurately, leading to challenges in generating correct templates.

To enhance the understanding of words in logs,
researchers~\cite{annotation1,annotation2,annotation3,annotation4} have proposed a multi-round annotation framework for large language models (LLMs), which typically consists of  an annotation component and an in-context learning (ICL) component.
Firstly,
in the annotation component, each round involves labeling a subset of logs that lack corresponding templates. Researchers typically define a similarity score among logs, such as cosine similarity between log embeddings~\cite{embedding1,embedding2,embedding3,embedding4}. Based on this score, they select a set of logs within a budget $k_a$ to represent the other unlabeled logs.
Second,
the in-context learning component involves selecting top-$k_c$ similar logs with their templates to serve as context for unlabeled logs~\cite{Divlog,icl1,icl4}. This contextual information helps the LLMs understand the correlation between words and labels, enabling them to predict labels for the unlabeled logs.
After each round, the framework filters logs with lower prediction confidence  as unlabeled. It then iteratively repeats the annotation and ICL steps until the LLMs generate templates for all logs.

Nevertheless, despite the success of existing frameworks, there are still three limitations. 
Firstly, they define log similarity based on log embeddings in both annotation and in-context learning components.
However, this embedding similarity overlooks crucial aspects such as important words (e.g., POST), and more emphasizes useless words (e.g., time stamps or IP addresses in figure 1).
Thus, in annotation or in-context learning, the similar logs may share duplicate time stamps while lack crucial words, offering insufficient information.
Second, in the annotation component, existing approaches typically select the {\color{black}top-$k_a$} unlabeled logs for annotation where selected logs are similar to as many logs as possible. However, this approach neglects the importance of LLM prediction confidence, where logs with low prediction confidence often require more annotations.
Third, in the in-context learning component, existing approaches~\cite{adaicl, vldbicl1,icl,vldbicl2} select a fixed number of top-$k_c$ similar labeled logs to serve as context for each unlabeled log. However, using a fixed number of labeled logs may fail to provide sufficient context for the LLM, potentially leading to the generation of incorrect templates.

To address the aforementioned limitations, we propose an LLM-driven multi-round annotation framework with adaptive in-context learning, called LLMLog.
Specifically, this framework defines an edit-distance based log similarity metric which emphasize the important keywords by measuring word insertion/deletion/replacement operations. 
The most useful logs in annotation and ICL can be respectively identified by adaptively varying the wordset.
Benefit from similarity metric, we define activated logs for representativeness of annotated logs.

In each round of the annotation component, we first adaptively determine the budget for that round. 
Then, we propose a greedy algorithm to annotate logs by jointly optimizing the confidence of LLM predictions and the total number of activated logs, taking into account logs with low confidence and high representativeness.
In adaptive ICL component, we select the minimum number of demonstrative logs from annotation set for each input log by ensuring all keywords are covered from input log in a greedy manner.
Compared to the fixed top-$k_c$ strategy, our adaptive selection can precisely guide LLM by ensuring the informativeness of contextual information from each demonstrative log.
The contributions of this paper are summarized as follows.

\begin{itemize}[leftmargin=*]
	\item We propose LLMLog, a novel LLM-driven multi-round annotation framework with adaptive in-context learning for log template generation. The framework addresses limitations in existing methods by iteratively improving annotation and template generation processes.

	\item {\color{black} We propose an semantic edit-distance-based metric for keyword coverage and an LLM feedback metric considering word consistency and confidence. Additionally, we introduce an adaptive strategy to dynamically allocate the annotation budget. Finally, we formulate the NP-hard multi-round log annotation problem and develop a greedy algorithm with theoretical guarantees.}

	\item  We develop an adaptive strategy for selecting the minimum number of demonstrative logs as context for each unlabeled log. This approach ensures that all keywords from the input log are covered, enhancing the LLM's understanding and improving template generation accuracy compared to fixed top-k strategies.

\item 
Extensive experiments on 16 datasets demonstrate that LLMLog achieves higher accuracy than state-of-the-art baselines while reducing computational and API costs through adaptive demonstration selection and efficient log annotation.

\end{itemize}

\section{Preliminary and Related Works}\label{sec:related}
In this section, we first present the preliminaries on logs and log template generation. Then, we discuss the related works. Important notations of  this paper are summarized in Table~\ref{tab:notation}.

\subsection{Log Template Generation Problem} \label{ssec:templategeneration}
Logs are textual sequences that document the behaviors and events of a system. 
Formally, a log $s$ is one system message recording a system event, composed of a set of words, denoted as $s = (w_1^s, w_2^s,..., w_{|s|}^s)$, i.e., \texttt{2024-11-14 192.168.1.1 GET /index.html 200 123ms} is a log.
Logs serve as an essential tool for system monitoring, debugging, and performance analysis~\cite{vldbanomolydetection}. 
By capturing a detailed account of system activities, logs allow developers to trace events, identify issues, and ensure optimal functionality~\cite{vldblog3,wwwlog}. 
Recent research in database and data management focuses on various aspects of log analysis, including log anomaly detection~\cite{vldbanomolydetection,vldblog2,tkdelog13,kddtemplate2}, log-based retrieval~\cite{sigmodlogretrieval10,sigmodlogretrieval12,icdelog18,icdelog17}, and root cause analysis~\cite{vldblogrootcause6,vldblogrootcause8,tkdelog14,tkdelog15}.

In modern systems, logs are produced in massive volumes daily.
Each system generates multiple system events, producing logs in various formats corresponding to different templates.
However, the large size of  logs makes it challenging and time-consuming for  developers and operators to trace events and identify issues efficiently.
Recent studies have explored knowledge base template generation for user queries~\cite{vldbpattern1,sigmodpattern1,sigmodpattern2,sigmodtemplate4}, while several researchers have proposed methods for generating templates for SQL queries~\cite{vldbpattern2,vldbpattern3,icdetemplate3,icdetemplate5}. There are also several works proposed to generate templates for Web pages~\cite{tkdetemplate1,tkdetemplate2,vldbwebtemplate1,tkdewebtemplate2}. 
Motivated by template generation works in database area~\cite{sigmodpattern1,vldbpattern2}, we can generate templates for logs, converting unstructured logs into structured templates, enabling efficient storage, analysis, and debugging~\cite{vldbanomolydetection,vldblog2}.
 Formally, 
 given a log $s = (w_1^s, \dots, w_{|s|}^s)$ and  word type candidates $\mathcal{T}=\{\tau_1,\cdots, \tau_{|\mathcal{T}|}\}$ and system keyword candidates $\mathcal{K}=\{k_1,\cdots, k_{|\mathcal{K}|}\}$, 
 the target of log template generation is to generate a template $t=(w_1^t, \cdots, w_{|t|}^t)$  for $s$, which maps each word $w_i^s \in s$ to a corresponding word type $w_i^t \in \mathcal{T}$ or a system keyword $w_i^t \in  \mathcal{K}$.
 
 For example, the template of the log  \texttt{2024-11-14 192.168.1.1 GET /index.html 200 123ms} is 
 \texttt{{[DATE]} [IP] $<$GET$>$ [RESOURCE] [STATUS] [LATENCY]},
 where $[\cdot]$ is a word type and $<\cdot>$ is a keyword. The template of log  \texttt{2024-11-14 192.168.1.2 DELETE /test.html 200 123ms} is  \texttt{{[DATE]} [IP] $<$DELETE$>$ [RESOURCE] [STATUS] [LATENCY]}.
By identifying structured templates from raw logs, log template generation streamlines the management and analysis of large-scale log data from systems such as Apache~\cite{hadoop,logpai}, HDFS~\cite{hadoop,logpai}, and Spark~\cite{spark,logpai}. 
These templates make it easier to track IP addresses, enabling the detection of anomalies or errors in database systems~\cite{vldbanomolydetection,vldblog2}.

\subsection{Log Template Generation Approaches} \label{ssec:log_models}
Depending on the techniques used for generating log templates, existing works can be classified into three categories: heuristic-based approaches~\cite{drain,swisslog,logmine,shio,logsig}, neural network-based approaches~\cite{nulog,logppt}, and LLM-based approaches~\cite{Divlog}.

\subsubsection{Heuristic-based Approaches}  \label{ssec:heuristic}

Heuristic-based models~\cite{drain,swisslog,logmine,shio,logsig} depend on handcrafted rules or heuristics to group logs into multiple clusters. 
In practice, heuristics cannot match the characteristics of logs from multiple domains well, resulting in low flexibility and adaptability in labelling diverse logs.
For instance, Drain~\cite{drain} assumes that the first word of a log must always be a keyword. However, this assumption is not consistent across logs from different systems. Moreover, logs from specific datasets may follow unique patterns. For example, in the HDFS dataset~\cite{logpai}, the pattern \texttt{blk\_3651} represents a block with ID \texttt{3651}, which can be interpreted using a specific regular expression like $blk\setminus d+$. 
However, such patterns do not exist in other datasets, such as Mac~\cite{logpai} and Android~\cite{logpai}, requiring the re-writing of regular expressions for each dataset, which incurs significant human effort.
As each system have logs in diverse templates, the developers have to check the whole log sets from each system to separately define rules or threshold for template generation, which is impractical in real-world log datasets~\cite{logpai} with hundreds of templates.

\subsubsection{Neural Network-based Approaches}  \label{ssec:nnmodel}

Neural network-based models~\cite{nulog,logppt} train neural network models (i.e. transformers) to predict each word type in logs. 
However, the training process demands large-scaled logs with annotated word type. 
The annotation of word type in large-scaled logs requires experienced software maintainer, which is expensive. 
Besides, the trained template generation model relies on patterns existing in training set with potential low generalization ability to unseen patterns.
For examples, if logs in training set only cover the template \texttt{[ADDRESS] Byte flume reports host available}, the trained model may falsely infer the template of log \texttt{[ADDRESS] Byte flume reports host available again} while an important keyword \texttt{again} is missed due to it is unseen in training set.
One intuitive solution to generalization problem is to re-train or update the neural network model based on the  logs~\cite{vldbtransfer1,vldbtransfer2}. However, 
it is also costly to re-train the transformer model~\cite{logppt,nulog,Divlog}.
As high cost for human-resources and low generalization problem for both heuristic-based approaches and neural network-based approaches, LLM-based template generation methods~\cite{Divlog} have been proposed.

\begin{table}[]
	
	\caption{Important Notations}
	\small
	\color{black}
		\begin{tabular}{l|l}
			\hline
			\textbf{Notation} & \textbf{Definition} \\ \hline
			$s, t$ &  A log $s \in S$ and  its template $t\in T$\\ \hline
			
			$\hat{t}$ &  Predicted template $\hat{t}$ of log $s$, $t \in \hat{T}$       \\ \hline
			
			$w_i^s $ &  The $i$-th word in the log $s$       \\ \hline
			
			$w_i^t $ &  The $i$-th type in template $t$       \\ \hline
			
			$\mathcal{K}$ &   Keyword set   \\ \hline

			$\mathcal{T}$ &  Candidate word type set    \\ \hline

			$G=(S,W)$ & Bipartite graph between log set $S$ and words $W$  \\ \hline
			
			$r$ &  The $r$-th round    \\ \hline

			$L_r$ & The annotated logs at the $r$-th round\\ \hline
			
			$L$ & Annotated logs\\ \hline
			
			$U$ & Unlabeled logs \\ \hline
			
			$cosine(\cdot)$ & {\color{black} Cosine similarity score}\\ \hline
			
			$SED(\cdot)$ & Semantic-based edit distance between two logs\\ \hline
			
			$I_{s_i}$ & Representative score of $s_i$ \\ \hline
			
			$ P(s_i,\hat{t}_i)$ & Average probability of predicted template $\hat{t}_i$\\ \hline 
			
			$\mathbb{I}(s_i,\hat{s}_i)$ & Prediction consistency indicator \\ \hline
			
			$C(s_i,\hat{t}_i, \hat{s}_i)$ & Confidence score \\ \hline

			$B_r$ & Annotation budget at the $r$ -th round \\ \hline
			
			$W_r$ & Identified words at the $r$ -th round \\ \hline

			$IS(\cdot)$ &  The informative score in Equation~\eqref{eq:multiroundannotation}\\ \hline
			
			$\lambda$ &  Trade-off parameter  in Equation~\eqref{eq:multiroundannotation} \\ \hline

			$D_s$ & Demonstrative logs of $s$ \\ \hline
			
			$UW(D_s)$ & Word set in $D_s$\\ \hline

		\end{tabular}
	\label{tab:notation}
\end{table}
\subsubsection{LLM-based Approaches} \label{sec:LLMrelatedwork}
Large language model pre-trained on massive corpora has obtained extraordinary natural language processing ability~\cite{li2024survey}. 
Motivated by success of LLM, researchers have investigated LLM-based log template generation that feed the log and candidate words labels to LLM with instructing LLM to label each word.
However, as logs contain various domain-specific words which is excluded in open-sourced corpora, it is difficult for LLM to understand these words~\cite{Divlog}. 
Based on recent studies, performance of pre-trained LLM can be training-freely augmented by prompting LLM with several examples with ground truth templates as contexts, calling ICL~\cite{Divlog, adaicl,vldbicl1,icl,icl2,icl1,vldbicl2}. 

Researchers have proposed a multi-round annotation framework for LLMs~\cite{adaicl,Ideal, votek, vldbhuman} with an annotation component and an ICL component.
As for annotation, they firstly compute the cosine similarity score between embedding of logs.
Based on the score, annotation process finds a representative subset as labeled logs within a budget $k_a$.
In ICL component, for each unlabeled log, they selects top-$k_c$ similar labeled logs which offer contextual knowledge for LLM to comprehend the correlation between each word and labels.
AdaICL~\cite{adaicl} uses each unlabeled log to represent its top-$k_a$ similar logs and proposes selecting $k_a$ unlabeled logs to represent as many logs as possible.
IDEAL~\cite{Ideal} employs an information diffusion process~\cite{Ideal} to select the top-$k_a$ most informative unlabeled logs.
DivLog~\cite{Divlog} selects the top-$k_a$ most diverse and informative unlabeled logs.
However, the current works achieve sub-optimal performance due to following three issues: 1) The embedding of logs overlooks the importance of keywords but emphasizes several useless words thus cannot identify the most informative contexts. 2) The annotation ignores that logs are unconfident for LLM also worthy for annotation. 3) Fixed top-$k_c$ similar demonstrations misguide LLM by irrelevant/scarced contexts.
 \section{Method} \label{sec:method}
In this section, we introduce our LLM-driven multi-round annotation and adaptive in-context learning framework in Figure~\ref{fig:framework}. 
\subsection{Framework Overview}
\noindent\textbf{Step 1: Log Similarity and Annotation.}  \label{ssec:step1}
This component aims to construct and update the annotated log set $ {L}$ at each round $r$ by selecting the most representative and challenging unlabeled logs from the dataset ${U}$. At the $r$-th round, we compute an {edit-distance-based metric} $\text{SED}$ between logs, which emphasizes important keywords to identify representative logs.
Based on SED, the framework selects a subset ${L}_r$ of size $B_r$ for human annotation by jointly considering two factors: representativeness and LLM prediction confidence.
More details can refer to Section~\ref{ssec:logdistance} and Section~\ref{ssec:multi_round}.

\noindent\textbf{Step 2. Adaptive Demonstration Selection.}
This component dynamically selects a {minimum number of labeled logs} from the labeled log set ${L}$ to serve as context for each unlabeled log $s_i \in {U}$ during LLM inference. This dynamic selection ensures that all words in the input log $s_i$ are covered while avoiding irrelevant or redundant context, which could otherwise degrade the quality of template generation.
More details can refer to Section~\ref{ssec:adaptive}.

\noindent\textbf{Step 3. LLM-Driven Template Generation.}
After constructing the adaptive context for each unlabeled log $s_i$, we generate the final template prediction. This process begins with {prompt construction}, which includes three key elements: an instruction for template generation, examples of labeled logs with their templates (retrieved from the context), and the input log $s_i$ with its identified words. 
Once the prompt is constructed, it is fed into the LLM for inference, resulting in the predicted template $\hat{t}_i$ for the input log $s_i$.

\begin{figure*}[t]
	\centering
	\includegraphics[width=\linewidth]{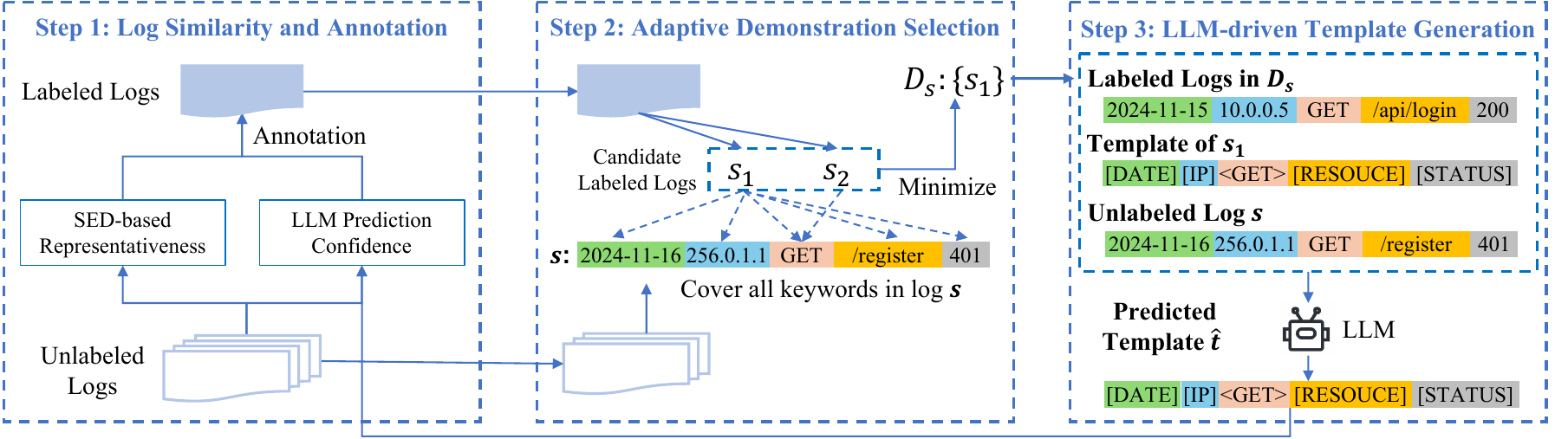}
	\vspace{-2em}
	\caption{ 
		Framework overview of
		LLMLog consists of three key components:
		(1) Log similarity and annotation:
		LLMLog employs a semantic-based edit-distance (SED) metric to assess log similarity.
		{\color{black} It selects a subset of logs for human annotation by identifying the most representative and challenging ones through a greedy algorithm across multiple rounds.}
		(2) Adaptive demonstration selection:
		LLMLog adaptively selects a minimal set of labeled logs that comprehensively cover all relevant words for each input log. This ensures that the contextual information provided to the LLM is both relevant and concise.
		(3) LLM-driven template generation:
		LLMLog constructs tailored prompts by incorporating the adaptive contexts and unlabeled logs. These prompts are then input into the LLM, which generates accurate templates for the unlabeled logs.}
	\label{fig:framework}
\end{figure*}

\subsection{Log Similarity} \label{ssec:logdistance}
As mentioned in Section~\ref{sec:LLMrelatedwork}, existing works compute the similarity~\cite{similarity} between two logs $s_i$ and $s_j$ using the cosine similarity of their embeddings~\cite{embedding1,embedding2,embedding3}, which can be defined as 
\begin{align}\label{eq:cosine}
sim(s_i,s_j)=cosine(\mathbf{s}_i, \mathbf{s}_j),
\end{align}
where  $\mathbf{s}_i = \frac{\sum_{k=1}^{|s_i|} \mathbf{w}^s_k}{|s_i|}$ is the embedding of log $s_i$, and $\mathbf{w}^s_k$ represents the embedding of the $k$-th word in log $s_i$.
However, the cosine similarity for log embedding is not suitable for representing log similarity in template generation, as it overlooks the importance of several keywords and favors irrelevant words with a large number of letters, such as timestamps or IP addresses.
We aim to find similar template logs as contexts for the target log in the ICL phase, where each contextual log should provide enough information to guide the LLM on how to process each word.

Suppose we have an unlabeled log \texttt{com.cse.ust.hk:8080 POST}, and two labeled logs, \texttt{proxy.cse.cuhk.edu.hk:5070 POST} and \texttt{com.cse.ust.hk:8080 GET}.
Intuitively, we should pick the first one to demonstrate how LLM converts the IP address to \texttt{[IP]} and tags \texttt{POST} with \texttt{$<$POST$>$}.
However, the cosine similarity in Equation~\eqref{eq:cosine}  emphasizes the longer words, such as \texttt{com.cse.ust.hk:8080}.
Therefore, the cosine similarity metric selects \texttt{com.cse.ust.hk:8080 GET} as the context, which has no information about the keyword \texttt{POST}.
Such a context may falsely guide the LLM to generate templates with an irrelevant word, \texttt{GET}.
To emphasize the important words, we can build a bipartite graph to model the similarity via connections between logs and the words they contain.

\begin{myDef}[Log-Word Bipartite Graph]
	\color{black}
	We build a bipartite graph $G=(S,W)$ where $S=\{s_i\}_{i=1}^{|S|}$ is the set of logs and $W =\cup_{s_i \in S}\cup_{w \in s_i}{w}$ is the set of words contained in $S$. 
	There is an edge connecting log $s_i$ and word $w$ if $w \in s_i$.
\end{myDef}

In the log-word bipartite graph, multiple logs can be linked through their shared words. 
This implies that a log $s_i$ can assist an LLM in understanding a similar log $s_j$ if they share similar or identical words. Consequently, the LLM can generate the correct template for the words in log $s_j$.
Here, we propose a novel semantic-based edit-distance in real sequence (SED) to determine the similarity among two logs, instead of using sentence embedding-based cosine similarity in Equation~\eqref{eq:cosine}. 

The core idea of our proposed metric $SED(s_i, s_j )$, is to compute the minimal number of operations (including deletion, insertion, and replacement) required to transform log $s_i$ into log $s_j$, while incorporating word semantics into the calculation.
Formally, given two logs $s_i = (w^{s_i}_1, \cdots, w^{s_i}_{|s_i|})$ and $s_j = (w^{s_j}_1, \cdots, w^{s_j}_{|s_j|})$ and all the labeled logs $L=\{(s_k,t_k)\}_{k=1}^{|L|}$, the semantic-based EDR similarity score $SED(s_i, s_j )$ between $s_i$ and $s_j$ is defined as follows:

\begin{align} \label{eq:SED}
		SED(s_i, s_j) = \begin{cases}
			|s_i^r|, if |s_j^r| = 0 \\ 
			|s_j^r|, if |s_i^r| = 0 \\ 
			min \begin{cases}
				SED(R(s_i^r), R(s_j^r))+c(s^r_i[0],s^r_j[0])\\ 
				SED(R(s_i^r), s_j^r)+1 \\ 
				SED(s_i^r, R(s_j^r))+1
			\end{cases}
		\end{cases}
\end{align}
\begin{align} \label{eq:cost}
	c(w_1,w_2) =  \begin{cases}
		1, \quad cosine(\mathbf{w}_{1}, \mathbf{w}_{2}) \geq 0 \\
		0, \quad cosine(\mathbf{w}_{1}, \mathbf{w}_{2})  < 0
			\end{cases}
\end{align}

where $s_i^r = \{w_i \mid w_i \in s_i \land w_i \notin s_k, \forall s_k \in L\}$ represents the remaining words in $s_i$ that are not identified in the labeled logs.
Also,  $R(s_i^r)$ represents the sub-sequence of words in $s_i^r$ with the current first word removed, Equation~\eqref{eq:SED} recursively enumerates all words in $s_i$ and $s_j$.
More specifically,
as SED is designed to measure the minimal distance between two logs, we use $min(\cdot)$ instead of $avg(\cdot)$ or $max(\cdot)$ to compute the minimal operation of replace/insertion/deletion. Instead of only computing the minimal operation for the first word in $s_i$ and $s_j$, the minimum distance for the two sequences is obtained by recursively iterating $R(s_i^r)$ and $R(s_j^r)$ in a dynamic programming manner.
The function $c(w_1,w_2)$ computes the cosine similarity between the two words based on cosine similarity and $\mathbf{w}_1$ is the embedding of word $w_1$~\cite{llm2vec}.
We define $c(w_1,w_2) = 1$ if the word similarity between $w_1$  and is greater than 0 otherwise the value $c(w_1,w_2) = 0$. 

Compared to cosine similarity in Equation~\eqref{eq:cosine}, SED is better at identifying logs with the same templates by emphasizing keywords.
By definition in Section~\ref{ssec:templategeneration}, logs under the same template share keywords. Thus, we can use edit distance to capture the common parts. The effect of other words can be further reduced by using an adaptive word set.
For example, consider the previous case where cosine similarity produces a false positive, identifying \texttt{com.cse.ust.hk:8080 GET} as a match for the unlabeled log \texttt{com.cse.ust.hk:8080 POST}, while ignoring the true positive log \texttt{proxy.cse.cuhk.edu.hk:5070 POST}.
As the adaptive word set in the SED metric removes irrelevant IP addresses as shown in labeled logs, the edit distance between input log and \texttt{proxy.cse.cuhk.edu.hk:5070 POST} becomes $0$, which is smaller than the distance between it and \texttt{com.cse.ust.hk:8080 GET}. Therefore, SED is more suitable for measuring the similarity between logs in the similar templates.
In particular, SED can be computed in  $O(|s_i|*|s_j|)$, where $|s_i|$ denotes the number of words in $s_i$.

\subsection{Multiple Round Log Annotation}\label{ssec:multi_round}
In this subsection, we introduce our LLM-driven multi-round annotation approach. 
Specifically, our framework for multi-round labeled log annotation follows a common setup where the total number of rounds is $n$, 
and each round $r$ is allocated a budget of $B_r$. In the $r$-th round, 
given the unlabeled logs  $U = \{s_i\}_{i=1}^{|U|}$,
the objective is to annotate up to $B_r$ logs,
 which are then added to the labeled log set $L = \{(s_j, t_j)\}_{j=1}^{|L|}$. 
This labeled log set $L$  is subsequently utilized as demonstration candidates for unlabeled logs $U\setminus L$ during LLM inference, improving its ability to generate accurate templates.

To achieve effective annotation, we propose two complementary metrics to guide the selection of the most representative and challenging unlabeled logs for annotation in each round. 
The first metric is the representative score, which evaluates how representative an unlabeled log is in relation to other unlabeled logs. 
The second metric is LLM confidence, which measures the LLM's confidence in generating a correct template for each unlabeled log.
By combining these two metrics, our approach ensures that the most representative and challenging logs are strategically selected for annotation in each round. The details of these two metrics are discussed in the following sections.

\subsubsection{Representative Score} \label{ssec:represenativescore}
We firstly present the metric of the reprensentativeness of labeled logs. 
As introduced in Section~\ref{ssec:logdistance}, 
similar words tend to have the same types, 
and logs with low SED scores are likely to share similar templates. 
Therefore, we aim to annotate logs that share similar words with many other logs, maximizing the representativeness of the labeled set. This approach ensures that the labeled logs capture a diverse and meaningful range of patterns present in the data.
As a result, we can select representative logs for labeling because they provide more contextual information for other logs during in-context learning (ICL).
 Formally, given two logs $s_i$ and $s_j$, we say that $s_j$ is represented by $s_i$ if $SED(s_i, s_j) \leq \delta*min(len(s_i), len(s_j))$.
We define the representative set for a log $s_i$ as:
\begin{align}\label{eq:repre_set}
I_{s_i} = \{s_j |SED(s_i, s_j) \leq \delta*min(len(s_i), len(s_j)) \}, \delta \in (0,1]
\end{align}
The threshold is set to $\delta*min(len(s_i), len(s_j))$, implying that the represented log $s_j$ must have at least one common word to $s_i$. 
In general, the size of the set $I_{s_i}$ reflects how representative $s_i$ is among all unlabeled logs.
$\delta$ is hyper-parameter to control the size of representative group. 
Intuitively, $\delta$ should not be too large as it will weaken the representativeness of each log, thus provide less informative contexts for unlabeled logs.
We give a parameter sensitivity experiment in Section~\ref{ssec:para} to investigate the effects of $\delta$.

\subsubsection{LLM Prediction Confidence Score} \label{ssec:LLMconfidence}

We introduce the confidence score of each unlabeled log based on the prediction of LLMs. 
Intuitively, if the LLM exhibits low confidence in generating an accurate template for a log, 
that log is prioritized for annotation. 
By focusing on these low-confidence logs, the annotation process targets the most challenging cases, 
thereby improving the LLM's overall performance on similar logs, which is a common practice in LLM annotation works~\cite{adaicl,votek,Ideal}.

In general, given an unlabeled log $s_i=(w^{s_i}_1,\dots, w^{s_i}_{|s_i|})$,
 the generated log template by a LLM can be denoted as $\hat{t}_{i}=\{\hat{w}^{t_i}_1,\cdots,\hat{w}^{t_i}_{|t_i|}\}$.
Firstly, we can use the average predicted word-probability to estimate the confidence by LLM, defined as follows.
\begin{align} \label{eq:tokenprob}
	 P(s_i,\hat{t}_i) = \frac{1}{|\hat{t}_i|}*\sum_{k=1}^{|\hat{t}_i|} p({w}^{\hat{t}_i}_k)
\end{align}
where $p({w}^{\hat{t}_i}_k)$ is the predicted probability of LLM for each word in the predicted template $\hat{t}_i$, i.e, ${w}^{\hat{t}_i}_k\in \hat{t}_i$.
If the average probability is low, it indicates the LLM's low confidence in its predictions, suggesting that the log is challenging to interpret and needs to be prioritized for annotation.
Suppose there is a log \texttt{com.cse.ust.hk:8080 DELETE} where the keyword \texttt{DELETE} does not exist in labeled set. LLM is less confident for \texttt{DELETE} than that other identified keywords, like \texttt{POST}. Therefore, the predicted probability of the log tends to be less than other logs. The average probability metric selects this log for annotation. 

However, a high predicted word-probability $ P(s_i,\hat{t}_i)$ cannot guarantee that the generated word type $w^{\hat{t}_i}_j \in \hat{t}_i$ really corresponds to $w_j^{s_i} \in s_i$.
The reason is that LLMs generate output words in a regressive manner, 
which cannot guarantee that the generated template word $w^{\hat{t}_i}_j \in \hat{t}_i$ corresponds accurately to $w_j^{s_i} \in s_i$. 
Additionally, current researchers~\cite{zhao2024chat2data, li2024llm,liu2024enhancing} have demonstrated that 
LLMs suffer from the hallucination problem, where the model may misunderstand the context and generate irrelevant or incorrect information.
For example, the correct template of a log \texttt{com.cse.ust.hk:8080 POST} is \texttt{[IP] $<$POST$>$}. 
However, the LLM might predict the template of this log as \texttt{[IP] [STATUS]} 
with high confidence, 
as it may mistakenly interpret \texttt{[IP] $<$POST$>$} as \texttt{[IP] [STATUS]} due to the hallucination problem.
Thus, the LLM produces an incorrect result even exhibiting high confidence in the predicted template.

Therefore, except for average predicted word-probability metric, inspired by the template generation requirements in Section~\ref{ssec:templategeneration}, we can derive a word-consistency metric for template generation task.
Specifically, in addition to the predicted template $\hat{t}_i$ of $s_i$
generated by the LLM, 
we enable the LLM to generate the corresponding words $\hat{s}_i = (w^{\hat{s}_i}_1,\cdots, w^{\hat{s}_i}_{|\hat{s}_i|})$  as well. 
We then compare the consistency between the words in the input log $s_i= (w^{{s}_i}_1,\cdots, w^{{s}_i}_{|{s}_i|})$ and  the generated words $\hat{s}_i = (w^{\hat{s}_i}_1,\cdots, w^{\hat{s}_i}_{|\hat{s}_i|})$, with consistency indicator $	\mathbb{I}(s_i,\hat{s}_i) $ defined as follows.

 \begin{align} \label{eq:consistency}
 	\mathbb{I}(s_i,\hat{s}_i) = \begin{cases}
 		0, {s}_i= \hat{s}_i \\ 
 		1, {s}_i \ne \hat{s}_i \\ 
 	\end{cases}
 \end{align}

The word-consistency indicator function evaluates whether the words in the log $\hat{s}_i$ generated by the LLM are consistent with the original words in the input log $s_i$. Specifically, it ensures that no new words or non-candidate labels are falsely generated by the LLM. Additionally, it verifies whether the word count $|\hat{s}_i|$ matches the word count $|s_i|$, ensuring that each word is correctly labeled.
Any templates that fail to meet these two conditions are considered incorrect and can be used to identify error cases.
To assess the LLM's performance in the log template generation task, we combine the average word-probability and word-consistency metrics.
Formally, given a log $s_i$ and the predicted log template $\hat{t_i}$ associated with the predicted words $\hat{s}_i$, we define the confidence score as $C(s_i,\hat{t}_i, \hat{s}_i)$ as follows.
 into a weighted sum as follows.
  \begin{align} \label{eq:LLMhardness}
 	C(s_i,\hat{t}_i, \hat{s}_i) = a*(1-P(s_i,\hat{t}_i))+ (1-a)*\mathbb{I}(s_i,\hat{s}_i)
 \end{align}
where  $a$ is a weight to balance the average probability  $P(s_i,\hat{t}_i)$ and consistency score $\mathbb{I}(s_i,\hat{s}_i)$, where we use $1-P(s_i,\hat{t}_i)$ to select logs with  low confidence in the predicted template $\hat{t}_i$.
A larger $C(s_i, \hat{t}_i, \hat{s}_i)$ indicates that the LLM has low confidence and potential inconsistency between $\hat{s}_i$ and $s_i$, suggesting that $s_i$ is challenging.

\subsubsection{LLM-driven Log Annotation Problem} \label{ssec:LLMLogannotation}
We introduce our LLM-driven log annotation problem by incorporating the proposed representative score in Section~\ref{ssec:represenativescore} and the LLM prediction confidence score in Section~\ref{ssec:LLMconfidence}. The formal definition is given as follows.

\begin{definition}[LLM-driven Log Annotation Problem] \label{def:multiroundannotation}
Given a budget $B_r$ and an unlabeled log set $U$ at round $r$, where the LLM predicts a template $\hat{t}_i$ along with the corresponding log $\hat{s}_i$ for each unlabeled log $s_i \in U$, the objective is to select a subset of informative unlabeled logs $L_r \subseteq U$ for annotation, such that $|L_r| \leq B_r$. 
The selection aims to maximize the following objective:
	\begin{align} \label{eq:multiroundannotation}
	& 	IS(L_r)=\max  \sum_{s_i \in L_r}  (1-\lambda) \frac{|\bigcup_{i=0}^{|L_r|} I_{s_i} |}{|U|} + \lambda  C(s_i,\hat{t}_i, \hat{s}_i)  \\
		& s.t. \quad |L_r|\leq B_r, L_r \subseteq U
	\end{align}
where $\lambda$ is a trade-off parameter, $ I_{s_i}$ is the unlabeled logs represented by $s_i$ defined in Equation~\eqref{eq:repre_set}, and $ C(s_i,\hat{t}_i, \hat{s}_i) $ is the confidence score defined in Equation~\eqref{eq:LLMhardness}.
	
\end{definition}
As shown in Equation~\eqref{eq:multiroundannotation}, in each round $r$, the first term, $\frac{|\bigcup_{i=0}^{|L_r|} I_{s_i}|}{|U|}$, prioritizes selecting representative logs that are likely to impact a larger number of unlabeled logs and guide the LLM to process more logs effectively. 
The second term, $C(s_i, \hat{t}_i, \hat{s}_i)$, ensures the selection of logs with the lowest LLM prediction confidence.
By combining these two scores, we can effectively select $B_r$ informative logs, $L_r \subseteq U$, for annotation.

\begin{theorem} \label{theo:multiroundannotation}
The LLM-driven log annotation problem is NP-hard.
\end{theorem}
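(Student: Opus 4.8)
The plan is to prove NP-hardness by a polynomial-time reduction from a known NP-hard problem, using the observation that NP-hardness of the whole family follows as soon as the problem is hard for \emph{some} admissible parameter setting. The first step is to fix the trade-off parameter $\lambda = 0$. With $\lambda = 0$ the confidence term $C(s_i,\hat{t}_i,\hat{s}_i)$ drops out entirely and the objective is governed solely by the coverage term $|\bigcup_{s_i \in L_r} I_{s_i}|$, so maximizing $IS(L_r)$ subject to $|L_r| \le B_r$ becomes exactly the task of choosing $B_r$ logs whose representative sets cover as many unlabeled logs as possible. This is an instance of \emph{Maximum Coverage} over the ground set $U$ with candidate sets $\{I_{s_i}\}_{s_i \in U}$. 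I would also take the labeled set $L=\emptyset$ (the situation in the first round), so that $s_i^r = s_i$ and $SED$ in Equation~\eqref{eq:SED} reduces to the plain semantic edit distance, removing the word-removal bookkeeping from the gadget.

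Next I would build the reduction itself. Given a source instance consisting of a graph $G=(V,E)$ and a budget $k$, I would introduce one unlabeled log $s_v$ per vertex $v$ and craft the words of the logs so that the $SED$-based representative set $I_{s_v}$ (Equation~\eqref{eq:repre_set}) coincides with the closed neighborhood of $v$; that is, I would assign words, together with the word embeddings that drive the cost function $c(\cdot,\cdot)$ of Equation~\eqref{eq:cost}, so that $SED(s_u,s_v) \le \delta\cdot\min(\mathrm{len}(s_u),\mathrm{len}(s_v))$ holds exactly when $u=v$ or $uv \in E$, and strictly exceeds the threshold otherwise. Setting $B_r = k$ then makes an annotation set covering at least $t$ unlabeled logs correspond precisely to a set of $k$ vertices whose closed neighborhood has size at least $t$, and conversely. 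The encoding is clearly polynomial in the size of $G$, so the two decision problems have equal answers under the map.

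The main obstacle is the gadget construction, and the symmetry of $SED$ is what shapes it. Because $SED$ is symmetric and the threshold $\delta\cdot\min(\mathrm{len}(s_u),\mathrm{len}(s_v))$ is symmetric in $u$ and $v$, the relation ``$s_v \in I_{s_u}$'' is symmetric, so the realizable set systems are not arbitrary: they are exactly the closed-neighborhood systems of an undirected proximity graph. For this reason I would take the source problem to be the maximization version of \textsc{Dominating Set} (select $k$ vertices whose closed neighborhood is largest), which is NP-hard and whose instances are already neighborhood systems, so that no ``impossible'' set family ever needs to be realized. The delicate technical points are (i) controlling the recursive $\min$ in Equation~\eqref{eq:SED} together with the unusual cost rule of Equation~\eqref{eq:cost} so that adjacent logs land below, and non-adjacent logs strictly above, the length-scaled threshold simultaneously for all pairs, and (ii) verifying that padding words used to equalize lengths do not inadvertently create extra closeness. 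Once the gadget is checked, the correspondence between optimal solutions and the polynomial-time bound are routine, which completes the reduction and establishes that the LLM-driven log annotation problem is NP-hard.
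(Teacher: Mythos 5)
Your proposal starts from the same pivot as the paper's proof: set $\lambda=0$ so that the confidence term vanishes and the objective in Equation~\eqref{eq:multiroundannotation} collapses to maximizing $|\bigcup_{s_i\in L_r} I_{s_i}|$ under the cardinality budget $B_r$, i.e.\ a Maximum-Coverage-type problem over the representative sets. The paper stops essentially there: it ``regards $I_{s_i}$ as $S_i$'' and identifies the restricted problem with Max Coverage directly, implicitly assuming that any family of subsets can be realized as the representative sets of some collection of logs. You go one step further and notice that this assumption is not free: because $SED$ and the threshold $\delta\cdot\min(\mathrm{len}(s_u),\mathrm{len}(s_v))$ are symmetric, the relation ``$s_v\in I_{s_u}$'' is symmetric, so the realizable set systems are exactly closed-neighborhood systems of a graph, not arbitrary set families. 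Your switch of source problem to the maximization version of Dominating Set (whose instances are precisely neighborhood systems, and which remains NP-hard since its exact-coverage decision version is Dominating Set) repairs this soundness issue, and in that respect your argument is more careful than the paper's own sketch. Note also that the paper's write-up states the reduction direction backwards in one sentence (``transform an instance of \ldots log annotation to an instance for max coverage''); you keep the direction straight throughout.

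The one real gap on your side is the gadget. You correctly identify that the whole reduction rests on constructing, for an arbitrary graph $G=(V,E)$, logs and word embeddings such that $SED(s_u,s_v)\le\delta\cdot\min(\mathrm{len}(s_u),\mathrm{len}(s_v))$ holds exactly for $u=v$ or $uv\in E$, but you leave this construction as a to-do. This is not routine bookkeeping: the recursive $\min$ in Equation~\eqref{eq:SED}, the inverted-looking cost rule of Equation~\eqref{eq:cost} (cost $1$ when the cosine is \emph{nonnegative}), and the length-scaled threshold all interact, and the claim that padding words can equalize lengths without creating spurious adjacencies needs an explicit argument (e.g.\ one word per incident edge plus vertex-unique padding under a fixed global word order, with embeddings chosen so cross-vertex padding words have negative cosine). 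Until that gadget is exhibited and verified, your proof is a correct and more rigorous \emph{plan} than the paper's, but not yet a complete proof.
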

	
\begin{proof}
	
	We prove Theorem~\ref{theo:adaptivedemon} by reduction from the Max Coverage problem~\cite{maxcover} to our problem. 
	Due to space limits, we put full proof in technique report~\cite{techniquereport} Appendix 6.1. 
\end{proof}

\subsubsection{Algorithm for Annotation Log Selection} \label{ssec:annotationroundalgorithm}
As demonstrated in Theorem~\ref{theo:multiroundannotation}, the LLM-driven log annotation problem under the budget $B_r$ is \textit{NP-hard}, 
indicating that it is unlikely to be solved optimally in polynomial time. 
To address this, we propose a greedy algorithm with a theoretical guarantee.
The basic idea is to greedily select the unlabeled log that can bring the maximum information gain into the selected annotation set until exceeding the log budget $B_r$. 
Specifically, given the unlabeled  set ${U}$, we first define the marginal information gain of $s$ for the selected set $L_r$ as follows:
\begin{equation}\label{eq:marginal_score}
	\triangle IS(s|L_r) = IS( L_r \cup \{s\})-IS(L_r)
\end{equation}
The details are provided in Algorithm~\ref{alg:annotationselection}. Specifically, we first initialize the selected log set ${L}_r$ as $\emptyset$ (line 1). 
Then, for each unlabeled log $s \in U$, we update $SED(s,s_i)$ among the other unlabeled logs $s_i \in U \setminus s$ (line 3). 
Also, we obtain the representative log set $I_s$ for each unlabeled log $s$ (line 4) and compute the confidence score $ C(s,\hat{t}, \hat{s})  $ based on the LLM model $f_\theta$ (line 5).
Next, we compute the informative score $IS(L_r \cup s)$ by incorporating each unlabeled log $s \in U$ into the selected log set $L_r$. 
We then select the log $s^*$ with the maximum $\triangle IS(s|L_r)$, as defined in Equation~\eqref{eq:marginal_score} (lines 7–10). Afterward, we add $s^*$ to ${S}_r$ and remove it from the set of unlabeled logs $U$ (lines 11–12). 
This selection procedure is repeated until $B_r$ logs have been selected (lines 6–13).

\noindent\textbf{Time complexity} Loop in line 2 enumerates $U$ to compute the representative score and LLM confidence score. For each $s \in U$, let $s_{max}$ be the log with maximum number of words, it takes at most $O(|s_{max}|^2)$ time to compute SED while scores in line 4 and line 5 can be computed in constant time, where the loop costs $O(|U||s_{max}|^2)$. As for while loop starting in line 6, it requires to enumerate each instance in $U$ in inner for loop at line 7 to select one log with maximized $\triangle IS(s|L_r)$, which roughly takes time complexity $O(B_r*|U|)$. Thus, the overall time complexity is $O((B_r+|s_{max}|^2)*|U|)$.
\begin{theorem} \label{theo:annotationselection}
	Algorithm~\ref{alg:annotationselection} has an approximation ratio of $1-\frac{1}{e}$. 
\end{theorem}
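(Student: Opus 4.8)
The plan is to recognize that Algorithm~\ref{alg:annotationselection} is exactly the greedy algorithm for maximizing a normalized, monotone, submodular set function subject to the cardinality constraint $|L_r| \le B_r$, so that the classical greedy guarantee of Nemhauser, Wolsey and Fisher yields the $1-\frac{1}{e}$ ratio. The entire argument therefore reduces to verifying that the objective $IS(\cdot)$ of Equation~\eqref{eq:multiroundannotation} has these three structural properties over subsets of $U$.

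First I would decompose the objective as $IS(L_r) = (1-\lambda)\,g(L_r) + \lambda\,h(L_r)$, where $g(L_r) = \frac{1}{|U|}\bigl|\bigcup_{s_i \in L_r} I_{s_i}\bigr|$ is the normalized coverage of represented logs and $h(L_r) = \sum_{s_i \in L_r} C(s_i,\hat{t}_i,\hat{s}_i)$ aggregates the confidence scores. Normalization is immediate, since $g(\emptyset)=h(\emptyset)=0$ gives $IS(\emptyset)=0$. Monotonicity follows because enlarging $L_r$ can only grow the union $\bigcup_{s_i\in L_r} I_{s_i}$, and because each $C(s_i,\hat{t}_i,\hat{s}_i)\ge 0$ by Equation~\eqref{eq:LLMhardness} (as $P(s_i,\hat t_i)\in[0,1]$ and $\mathbb{I}\in\{0,1\}$), so adding an element never decreases either term; the coefficients $1-\lambda,\lambda\ge 0$ preserve this.

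The crux is submodularity. I would argue that $g$ is the standard weighted set-coverage function: for any $A \subseteq B \subseteq U$ and any $s \notin B$, the newly covered logs $I_s \setminus \bigcup_{s_i \in B} I_{s_i}$ form a subset of $I_s \setminus \bigcup_{s_i \in A} I_{s_i}$, whence the marginal gain satisfies $g(A\cup\{s\})-g(A) \ge g(B\cup\{s\})-g(B)$, which is precisely the diminishing-returns inequality. The confidence term $h$ is modular (additive over elements), so the marginal gain of adding $s$ equals $C(s,\hat{t},\hat{s})$ regardless of the current set, and a modular function is trivially submodular. Since any non-negative linear combination of submodular functions is submodular, $IS = (1-\lambda)g + \lambda h$ is monotone submodular with $IS(\emptyset)=0$.

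Finally I would invoke the Nemhauser--Wolsey--Fisher theorem: for a normalized monotone submodular function maximized under a cardinality constraint, the greedy rule that repeatedly adds the element of maximum marginal gain $\triangle IS(s\mid L_r)$ (Equation~\eqref{eq:marginal_score}) returns a set whose value is at least $(1-\frac{1}{e})\,IS(L_r^{\ast})$, where $L_r^{\ast}$ is an optimal feasible solution. As Algorithm~\ref{alg:annotationselection} performs exactly this selection until the budget $B_r$ is spent, it attains the claimed $1-\frac{1}{e}$ ratio. The main obstacle is establishing submodularity of the coverage term $g$; I expect the set-inclusion $I_s \setminus \bigcup_{B} I_{s_i} \subseteq I_s \setminus \bigcup_{A} I_{s_i}$, together with confirming that the confidence term introduces no supermodular interaction, to be the step requiring the most care.
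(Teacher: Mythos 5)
Your proposal is correct and follows essentially the same route as the paper: establish that $IS(\cdot)$ is normalized, monotone, and submodular (coverage term plus a modular confidence term), then apply the classical greedy $(1-\frac{1}{e})$ guarantee for cardinality-constrained submodular maximization. The paper's appendix proves the same monotonicity and submodularity lemma and then re-derives the Nemhauser--Wolsey--Fisher bound inductively rather than citing it, so the only difference is presentational.
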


\begin{proof}
	Let $IS(L_r^*)$ denotes the optimal value of objective in Equation~\eqref{eq:multiroundannotation} within budget $B_r$. 
	We first prove the  $\triangle IS(s|D_s)$ in Equation~\eqref{eq:marginal_score} is monotone increasing and submodular.
	Then we prove $(1-(\frac{1}{B_r})^{B_r})IS(L_r^*) \leq IS(L_r)$.
	Due to space limits, we put the full proof in technique report~\cite{techniquereport} Appendix 6.2.
\end{proof}

\begin{algorithm}[t]
	\color{black}
	\KwIn{Annotation budget $B_r$, unlabelled logs $U$, previous LLM prediction $\hat{T}_{r-1}^U$ and LLM $f_\theta$ }
	\KwOut{Selected logs ${L}_r$ for annotation}
	
	$L_r \leftarrow \emptyset$ \\
	\For{$s \in U$}{
		$\forall s_i \in U\setminus s$, $SED(s,s_i) \gets$Equation~\eqref{eq:SED}\\
		$I_{s}\gets$ Equation~\eqref{eq:repre_set}\\
		$ C(s,\hat{t}, \hat{s})  \gets$ Equation~\eqref{eq:LLMhardness}\\
	}	
	\While {$|L_r| < B_r$} {
		\For{$s \in U$}{
			$IS(L_r \cup \{s\}) \gets$Equation~\eqref{eq:multiroundannotation}\\
			$	\triangle IS(s|L_r) = IS( L_r \cup \{s\})-IS(L_r)$
		}
		$s^* = argmax_{s \in U} \triangle IS(s|L_r)$ \\
		$L_r = L_r \cup s^*$ \\
		$U = U\setminus  s^*$ \\
	}
	\textbf{return}  $L_r$
	\caption{Annotation selection at the $r$-th round}
	
	\label{alg:annotationselection}
\end{algorithm}
\setlength{\textfloatsep}{0.2cm}
\setlength{\floatsep}{0.2cm}
\subsubsection{Algorithm for Multiple Round Log Annotation} \label{ssec:annotationalgorithm}
 
In the multi-round framework, annotation selection can be performed iteratively using Algorithm~\ref{alg:annotationselection} under a total budget $B$.
The overall performance of LLM will converge once $B$ is large enough to cover all the words. 
We give a parameter sensitivity experiment in Section~\ref{ssec:para} to investigate the effect of $B$.
To execute Algorithm~\ref{alg:annotationselection}, a strategy is required to determine the budget $B_r$ for each individual round.
However, since the LLM operates as a black box, it is not possible to predict the performance improvement of the LLM as the labeled log set is augmented.
Alternatively, we design an adaptive strategy base on the number of identified words.
\begin{align} \label{eq:avgbudget}
	B_r = B_{r-1} * (1-\frac{\triangle W_{r-1}}{W_{r-1}}) \\
	\triangle W_{r-1} = W_{r-1} - W_{r-2}
\end{align}
Instead of manually setting a hyper-parameter for the total number of annotation rounds, Equation~\eqref{eq:avgbudget} adaptively computes $B_r$ based on the previous round budget $B_{r-1}$.
Specifically, in the $r-1$-th round, the number of identified words is denoted as $W_{r-1}$. $\triangle W_{r-1}$ denotes the increment of words in the $r-1$-th round.
By this definition, if the number of words in the $r-1$-th round increases compared to its previous round, LLMLog will annotate fewer logs in the current round.
On the contrary, if the number of words in the $r-1$-th round decreases, LLMLog will annotate more logs in the current round.
Since the adaptive budget strategy requires annotation results from the previous two rounds, we intuitively set the annotation budget for the first two rounds.

As illustrated in Algorithm~\ref{alg:multiroundannotation},
in line 1, we initialize the annotated labeled log set using the Determinantal Point Process (DPP)\cite{dpp, Divlog}, following the approach in Divlog\cite{Divlog} to select the most diverse logs.
After the initialization process, we iteratively perform the following steps:
First, in lines 5 to 7, we execute Algorithm~\ref{alg:adaptivedemon} to retrieve the demonstration set $D_s$ for each $s \in U$.
Then, we feed $D_s$ and $s$ into $f_\theta$ to obtain the predicted template $\hat{t}$.
After obtaining the prediction result, we determine the annotation budget $B_r$ for the next round based on Equation~\ref{eq:avgbudget} (line 10-11).
Once the annotation budget $B_r$ is determined, we proceed to a new round of annotation at line 12.
The labeled log set $L$ and unlabeled log set $U$ are updated after the annotation process (lines 13-14).

\subsection{Adaptive Demonstration Selection}\label{ssec:adaptive}
After annotating the selected logs $L_r = \{(s_i, t_i)\}_{i=1}^{B_r}$ in the $r$-th round, we obtain all labeled logs from the first round to the $r$-th round as $L = \cup_{i=1}^r L_i$. 
For each unlabeled log $s \in U$, 
we select a set of demonstrations $D_s \subseteq L$ to provide contextual information for $s$. 
These demonstrations $D_s$ will help the LLM understand the words in each unlabeled log $s$ and generate the correct log template $t$.
To improve efficiency and prevent irrelevant information, it is important to limit the size of the demonstration set $D_s$. 
As mentioned in Section~\ref{ssec:log_models}, existing works~\cite{Divlog,adaicl,votek,Ideal} commonly define a fixed number $k$ and select $k$ demonstration logs for each $s$. 
However, a fixed number $k$ is not ideal for every $s$, as the number of words and the difficulty of each unlabeled log can vary significantly.

In this paper, we propose an adaptive demonstration selection approach that does not rely on a fixed number $k$. 
Instead, the size of the demonstration set is dynamically adjusted based on the characteristics of each unlabeled log $s$. Formally, we define the adaptive log demonstration selection problem as follows.

\begin{algorithm}[t]
	\color{black}
	\KwIn{Annotation budget $B$, LLM $f_\theta$, unlabelled logs $U$, total rounds $n$}
	\KwOut{Annotated logs $L$}
	
	Initialize $L_1$, $\mathcal{K}$, $\mathcal{V}$ by DPP \\
	
	$r \leftarrow 2$ \\
	\While {$B \geq 0$ } {
		$\hat{T}_{r}^U \gets \emptyset$\\
		\For{$s\in U$}{$D_s \leftarrow AdaptiveDemonSelection(s,L)$\\
			$\hat{t}  \leftarrow f_\theta(s, D_s)$\\
			$\hat{T}_{r}^U \leftarrow \hat{T}_{r}^U \cup \hat{t}_{s}$\\}
		$r \leftarrow r + 1$ \\
		$B_r\gets$ Equation~\eqref{eq:avgbudget}\\
		$B_r = min(B_r,  B)$, $B = B - B_r$ \\ \label{eq:totalbudget}
		$L_r \leftarrow AnnotationSelection(B_r, U,\hat{T}_{r}^U, f_\theta)$ \\ 
		$L \leftarrow L \cup L_r$ \\
		$U \leftarrow U \setminus L_r$ \\
	}
	\textbf{return}  $L$
	\caption{Multiple Round Log Annotation}
	
	\label{alg:multiroundannotation}
\end{algorithm}
\begin{myPro}[Adaptive Log Demonstration
	Selection Problem]  \label{def:adaptivedemon}
	Given an input log $s = (w_1^s, \cdots, w_{|s|}^s)$ and the annotated log set $L = \{(s_i, t_i)\}_{i=1}^{|L|}$, 
	the goal is to select a demonstration set $D_s \subseteq L$ from the annotated log set $L$ by minimizing the following objective.
	 \begin{align} \label{eq:adaptivedemon}
		&\quad \qquad\min |D_s| \\
		s.t. &\forall\  w^s_i \in s, \exists\  w^{s_j}_k \in s_j  \land s_j \in D_s, \nonumber
		\\
		& \text{such that } cosine(\mathbf{w}^s_i, \mathbf{w}^{s_j}_k) \geq 0 \label{eq:adaptiveconstraint}
	\end{align}
	 where  $\mathbf{w}^s_i \in \mathbb{R}^{d_w}$ is the word embedding of the word ${w}^s_i$ and $cosine(\cdot)$ is the cosine similarity measurement.
\end{myPro}

\begin{theorem} \label{theo:adaptivedemon}
	The adaptive log demonstration selection is NP-hard.
\end{theorem}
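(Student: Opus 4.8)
The plan is to establish NP-hardness by a polynomial-time reduction from the classical \textsc{Minimum Set Cover} problem, which asks, given a universe $\mathcal{U} = \{e_1, \dots, e_m\}$ and a family of subsets $\mathcal{S} = \{S_1, \dots, S_n\}$ with $\bigcup_j S_j = \mathcal{U}$, for a minimum-cardinality subfamily whose union is $\mathcal{U}$. Given such an instance I would build an instance of Problem~\ref{def:adaptivedemon} as follows. The input log $s$ is assigned one word $w_i^s$ per universe element $e_i$, so $|s| = m$. For each subset $S_j \in \mathcal{S}$ I would create one labeled log $s_j$ whose words are exactly $\{w_i^s : e_i \in S_j\}$ (the templates $t_j$ are irrelevant to both the objective and the constraint, so they may be fixed arbitrarily), giving $L = \{s_1, \dots, s_n\}$. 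This construction is clearly polynomial in $m$ and $n$.

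The crux is to design the word embeddings so that the cosine-coverage relation in constraint~\eqref{eq:adaptiveconstraint} reproduces exactly the membership relation $e_i \in S_j$. To this end I would embed the $m$ distinct element-words as the $m$ vertices $\mathbf{u}_1, \dots, \mathbf{u}_m$ of a regular simplex, normalized to unit length and living in $\mathbb{R}^{m-1}$ (padded with zeros to reach dimension $d_w$ when required). These vectors satisfy $cosine(\mathbf{u}_i, \mathbf{u}_i) = 1 \ge 0$ and $cosine(\mathbf{u}_i, \mathbf{u}_{i'}) = -\tfrac{1}{m-1} < 0$ for all $i \ne i'$. Consequently, for any log $s_j$ and element $e_i$: if $e_i \in S_j$, then the copy of $w_i^s$ placed inside $s_j$ matches $w_i^s$ with cosine $1$, so $e_i$ is covered; if $e_i \notin S_j$, then every word of $s_j$ equals some $\mathbf{u}_{i'}$ with $i' \ne i$, whose cosine with $\mathbf{u}_i$ is strictly negative, so $e_i$ is not covered. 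Hence the subset of words of $s$ covered by including $s_j$ in $D_s$ equals exactly $S_j$, with no spurious coverage.

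With this property in hand, a demonstration set $D_s \subseteq L$ satisfies the coverage constraint~\eqref{eq:adaptiveconstraint} (every $w_i^s$ is matched by some word of some $s_j \in D_s$) if and only if $\bigcup_{s_j \in D_s} S_j = \mathcal{U}$, i.e. the corresponding subfamily is a set cover; moreover $|D_s|$ equals the number of chosen subsets. Therefore a minimizer of $|D_s|$ in Problem~\ref{def:adaptivedemon} yields a minimum set cover and vice versa, so any polynomial-time algorithm for the adaptive log demonstration selection problem would solve \textsc{Minimum Set Cover} in polynomial time, which proves NP-hardness. I expect the main obstacle to be the embedding step: one must exhibit explicit, polynomially describable unit vectors whose pairwise cosines are nonnegative precisely on the ``diagonal'' (same element) and strictly negative off it, and then argue that inserting literal copies of these vectors into each $s_j$ introduces no accidental coverage. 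The regular-simplex construction resolves this cleanly, since its uniformly negative off-diagonal inner products guarantee that each log covers only the elements it is intended to.
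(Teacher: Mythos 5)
Your proposal is correct and follows essentially the same route as the paper: a polynomial-time reduction from \textsc{Minimum Set Cover}, with the universe mapped to the words of the input log $s$ and each subset $S_j$ mapped to a labeled log whose words are exactly the elements of $S_j$, so that a minimum-size demonstration set corresponds to a minimum set cover. In fact your regular-simplex embedding is a welcome extra step of rigor that the paper's sketch omits: the paper implicitly identifies coverage with word identity, whereas the constraint in Equation~\eqref{eq:adaptiveconstraint} is stated via cosine similarity, and your construction with pairwise cosines of $-\tfrac{1}{m-1}$ off the diagonal guarantees there is no spurious coverage.
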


\begin{proof}
We prove Theorem~\ref{theo:adaptivedemon} by reducing our problem to the Set-cover problem~\cite{setcover}. Due to space limits, we provide the full proof in the technical report~\cite{techniquereport} Appendix 6.3.
	\end{proof}

\subsubsection{Adaptive Context Selection}
As demonstrated in Theorem~\ref{theo:adaptivedemon}, the adaptive log demonstration
selection problem is \textit{NP-hard}.
To address this, we propose a greedy algorithm with a theoretical guarantee.
As shown in Algorithm~\ref{alg:adaptivedemon}, 
the \textit{Adaptive Demonstration Selection Algorithm} aims to select a subset of labeled logs $D_s \subseteq L$ from a pool of labeled logs $L$ that are most relevant to a given unlabeled log $s$. 
The algorithm begins by initializing an empty set of selected logs $D_s$ and an empty set of union words  $UW(D_s)$ (line 1-2). 
It iteratively selects the most relevant labeled log from $L$ to include in $D_s$ until no additional contribution from labeled logs $L$.
\begin{algorithm}[t]
	\KwIn{Unlabeled log $s$ and all labeled logs $L$.}
	\KwOut{Selected demonstrated logs $D_s \subseteq L$ for $s$}
	
	$D_s = \emptyset$ \\
	
	$UW(D_s) = \emptyset$ \\
	
	\While {\textbf{True}} {
		
		\For{$s_i \in L$}{
			$UW(s_i|s) =\emptyset$\\
			\For{$w^s_j \in s$}{
				
				$w^{{s_i}*}_k = \arg\max_{w^{s_i}_k \in s_i}{cosine(\mathbf{w}^{s_i}_k, \mathbf{w}^s_j)}$\\
				\If{$cosine( \mathbf{w}^{{s_i}*}_k,\mathbf{w}^s_j)\ge 0$}{
					$UW(s_i|s) = UW(s_i|s) \cup \{w^{{s_i}*}_k \}$\\
				}
			}
			$UW(D_s \cup s_i) = UW(D_s) \cup UW(s_i|s)$ \\
		$\triangle UW(s_i|D_s)=UW(D_s \cup s_i) -UW(D_s) $ \\
		\If{$|\triangle UW(s_i|D_s)|$=0}{
			\textbf{break}
		}
	}
	\If{$\forall s_i \in L$, $|\triangle UW(s_i|D_s)|$=0}{
		\textbf{break}
	}
	$s_i^*=argmax_{s_i \in L}\triangle UW(s_i|D_s)$\\
	$D_s \leftarrow D_s \cup s_i^*$ \\
	$L \leftarrow L\setminus s_i^*$ \\
}
\textbf{return}  $D_s$
\caption{Adaptive Demonstration Selection}

\label{alg:adaptivedemon}
\end{algorithm}

Specifically, at each iteration, the algorithm evaluates every labeled log $s_i \in L$.
 For each word $w^s_j$ in the unlabeled log $s$, it identifies the most similar token $w^{s_i*}_k$ in the labeled log $s_i \in L$ using cosine similarity (line 6-7). 
 Words are considered similar if their cosine similarity score is greater than or equal to 0 (line 8).
 If the word in $s_i$ meets this threshold, it is added to the set of matched word set $UW(s_i|s)$ (line 9). 
 After processing all words in $s$, the algorithm merges current similar words $UW(D_s)$ in selected logs $D_s$ with the  words in $UW(s_i|s)$ (line 10). 
 The contribution of $s_i$ to $s$ regarding the selected log $D_s$ as $\triangle UW(s_i|D_s) = UW(D_s\cup s_i) -UW(D_s)$ (line 11).
If the larges contribution among logs $s_i \in L$ (i.e., $|\triangle UW(s_i|D_s)|$ is zero), the algorithm stops processing that log, as it adds no new information (line 12-13). 
If all remaining labeled logs $s_i \in L$  contributes no information to $s$, the algorithm terminates (line 14-15). 
Otherwise, we add $s_i^*$ to the selected set $D_s$, and remove it from the pool of labeled logs $L$ (line 16-18). 

\begin{theorem} \label{theo:adaptivedemonselection}
	Algorithm~\ref{alg:adaptivedemon} has an approximation ratio of $1+ ln(n)$. 
\end{theorem}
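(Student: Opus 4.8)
The plan is to recognize that Problem~\ref{def:adaptivedemon} is exactly an instance of the classical minimum Set-Cover problem and that Algorithm~\ref{alg:adaptivedemon} is the standard greedy heuristic for it, whose $H_n \le 1+\ln n$ guarantee is well known. Since the proof of Theorem~\ref{theo:adaptivedemon} already pins down the reduction to Set-Cover, I would reuse that same correspondence and then transfer the greedy analysis. First I would fix the set system: take the universe to be the words of the input log, $\mathcal{U}=\{w_1^s,\dots,w_{|s|}^s\}$, so that $n=|s|$, and for each labeled log $s_i\in L$ define its covering set
\[
\mathrm{Cov}(s_i)=\{\, w_j^s\in s : \exists\, w_k^{s_i}\in s_i \text{ with } cosine(\mathbf{w}_j^s,\mathbf{w}_k^{s_i})\ge 0 \,\}.
\]
The feasibility constraint in Equation~\eqref{eq:adaptiveconstraint} states precisely that $\bigcup_{s_i\in D_s}\mathrm{Cov}(s_i)=\mathcal{U}$, and minimizing $|D_s|$ is minimizing the number of sets needed to cover $\mathcal{U}$. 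Hence an optimal $D_s^*$ is a minimum set cover of size $\mathrm{OPT}$, assuming feasibility (every word of $s$ lies in some $\mathrm{Cov}(s_i)$).

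Next I would verify that Algorithm~\ref{alg:adaptivedemon} is the greedy Set-Cover rule. The set $UW(s_i|s)$ built in lines 5--9 is exactly $\mathrm{Cov}(s_i)$, $UW(D_s)$ is the set of already-covered words, and the marginal contribution $\triangle UW(s_i|D_s)=UW(D_s\cup s_i)-UW(D_s)$ in line 11 is the collection of currently uncovered words that $s_i$ newly covers. Line 17 selects $s_i^*=\arg\max_{s_i\in L}|\triangle UW(s_i|D_s)|$, i.e. the set covering the largest number of uncovered elements, and the loop terminates exactly when no set covers a new word (lines 14--15). This is the textbook greedy heuristic.

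I would then replay the classical greedy analysis via a charging argument. When greedy picks a set that newly covers $m$ words, charge cost $1/m$ to each of those $m$ words; each selected log then costs exactly $1$, so $|D_s|$ equals the total charge $\sum_{w\in\mathcal{U}}\mathrm{cost}(w)$. Fix any optimal set $S^*\in D_s^*$ and order its elements by the time greedy covers them; the element covered $j$-th from the last gets charge at most $1/j$, because at that moment $S^*$ itself still contains at least $j$ uncovered words, so the set greedy actually chose covered at least $j$ uncovered words. Summing over $S^*$ gives $\sum_{w}\mathrm{cost}(w)\le\sum_{S^*\in D_s^*}H_{|S^*|}\le \mathrm{OPT}\cdot H_n$, where $H_n=\sum_{i=1}^{n}1/i\le 1+\ln n$. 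Therefore $|D_s|\le(1+\ln n)\,\mathrm{OPT}$, the claimed ratio. (An equivalent route tracks the number $u_t$ of uncovered words, uses that the best set covers at least a $1/\mathrm{OPT}$ fraction to get $u_{t+1}\le u_t(1-1/\mathrm{OPT})$, and concludes termination within $\mathrm{OPT}\ln n$ rounds.)

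The main obstacle is not the $H_n$ algebra but making the reduction airtight: I must confirm that $\triangle UW$ counts \emph{distinct} words, so that overlaps among the $\mathrm{Cov}(s_i)$ are treated as set union rather than a multiset count; justify feasibility so that $\mathrm{OPT}$ is well defined; and match the $n$ in the statement to the universe size $|s|$ rather than to the number of labeled logs $|L|$. Once this correspondence is locked down, the $1+\ln n$ bound follows immediately from the standard greedy Set-Cover guarantee.
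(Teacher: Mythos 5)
Your proposal is correct, and it reaches the same conclusion as the paper by a somewhat different route. The paper's proof first establishes (as a lemma) that the marginal coverage $\triangle UW(s_i|D_s)$ is monotone increasing and submodular, and then invokes the known guarantee for the greedy algorithm on submodular set cover to obtain the $1+\ln(n)$ ratio; your proof instead identifies Problem~\ref{def:adaptivedemon} directly as an instance of classical minimum Set-Cover (universe $=$ the words of $s$, sets $=\mathrm{Cov}(s_i)$), checks that Algorithm~\ref{alg:adaptivedemon} is exactly the greedy heuristic, and replays the standard charging argument to get $|D_s|\le H_n\cdot\mathrm{OPT}\le(1+\ln n)\,\mathrm{OPT}$ from first principles. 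The two arguments are mathematically equivalent here, since for a pure coverage function submodularity is automatic; what your version buys is self-containedness (no appeal to an external theorem) and an explicit identification of what $n$ is (the number of words in $s$, i.e., the universe size), which the paper leaves implicit. You are also right to flag the two hygiene points the paper glosses over: $\triangle UW$ must be read as a set difference (distinct words, not a multiset count), and feasibility --- every word of $s$ being similar to some word of some labeled log --- is needed for $\mathrm{OPT}$ to exist; the algorithm's early-termination branch (lines 14--15) handles the infeasible case by returning a maximal partial cover, for which the approximation statement should be understood as conditional on feasibility.
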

	\begin{proof}
We prove that $\triangle UW(s_i|D_s)$ is monotone increasing and submodular. Then, according to \cite{setcoverproof},the approximation ratio is $1 + \ln(n)$. We provide the full proof in \cite{techniquereport} Appendix 6.4.
\end{proof}

\section{Experiments}\label{sec:experiments}

\subsection{Experiment Setting}\label{ssec:exp_setting}

\subsubsection{Datasets} 
We use the widely-used log template benchmark over 16 domains provided by Log-PAI~\cite{logpai} with their statistics summarized in Table~\ref{tab:stat}.
In each domain, there are 2,000 logs labeled with ground-truth templates and a unique ID~\cite{logpai, Divlog}.

\subsubsection{Metrics} 
We use three metrics to evaluate the effectiveness of template generation from logs, message level accuracy (MLA), precision template accuracy (PTA) and recall template accuracy (RTA). 
	MLA is to measure the effectiveness of template generation in log level while PTA and RTA evaluate it at template levels. 

\begin{itemize}[leftmargin=*]

\item \textbf{Message Level Accuracy (MLA).} 
\textbf{MLA} is defined as the ratio of logs whose templates are \textit{correctly generated} to the total number of logs~\cite{metric}. Formally,
given the unlabeled logs $S$, \textbf{MLA} is defined as $\frac{\sum_{s_i \in S} \mathbb{I}(\hat{t}_i= t_i)}{|S|}$, where $ \mathbb{I}(  \hat{t}_i= t_i)=1$ if the predicted template $\hat{t}_i$ of each unlabeled log $s_i \in S$ is the same as its ground truth $t_i$.

\item \textbf{Precision Template Accuracy (PTA).} 
\textbf{PTA} is the ratio of \textit{correctly generated} templates to all generated templates, where \textit{correctly generated} refers to the template whose corresponding logs are all correctly predicted. 
Formally, given the generated templates $\hat{T}$, 
 \textbf{PTA} is $ \frac{\sum_{\hat{t} \in \hat{T}}  {f}\left(\bigwedge_{s_i \in \text{logs}(\hat{t})} \mathbb{I}(\hat{t}_i=\hat{t})\right)}{|\hat{T}|},
$
where $ {f}(\cdot)=1$
when the  predicted template $\hat{t}_i$ of
each log $s_i \in  \text{logs}(\hat{t})$ is  correctly predicted as $\hat{t}$.

\item \textbf{Recall Template Accuracy (RTA).}
 \textbf{RTA} is  the ratio of ground truth templates for which all corresponding logs are correctly predicted to the total number of ground truth templates.
  Formally, given ground truth templates $T$, \textbf{RTA} is $   \frac{\sum_{t \in T} {f}\left(\bigwedge_{s_i \in \text{logs}(t)} \mathbb{I}(\hat{t}_i = t)\right)}{|T|},$
where $ {f}(\cdot)=1$
when the  predicted template $\hat{t}_i$ of
each log $s_i \in  \text{logs}({t})$ is  correctly predicted as ${t}$.
	 
\end{itemize}
Accuracy in template level is tighter than MLA as they require all corresponding logs are correctly generated, which are suitable to evaluate the effectiveness for large-scaled system logs~\cite{Divlog}. 

\subsubsection{Baselines} 
We select Drain~\cite{drain} and LogPPT~\cite{logppt} as representative for heuristic-based methods and NN-based methods respectively. 
We also include the LLM-based method, Divlog~\cite{Divlog}, which is the SOTA approach on template generation from log. 
Besides existing template generation methods, we adopt existing multiple-round annotation algorithm, AdaICL~\cite{adaicl} to Divlog, forming a new baseline namely AdaICL. 
For \textit{apple-to-apple} comparison, we select GPT-4o~\cite{gpt4o} as the base LLM for all LLM-based baselines and our proposed framework.
 
\begin{table}[t]
	\centering
	
	\color{black}
	\caption{\color{black}Statistics for sixteen log datasets. 
	}
	\label{tab:stat}
		\begin{tabular}{c|c|c|c}
			\hline
			\textbf{Dataset}        & \textbf{Templates\#}  & \textbf{Logs\#} &  \textbf{Words\#}     \\ \hline
			\textbf{Android}      & 165  & 437        & 857                           \\ \hline
			\textbf{BGL}     & 120  & 1367        & 2008                                  \\ \hline
			\textbf{Hadoop} & 114  & 734          & 979                                                  \\ \hline
			\textbf{HDFS}     & 14  & 2000        & 2960                                               \\ \hline
			\textbf{Linux}     & 118  & 290          & 667                                        \\ \hline
			\textbf{Mac}     & 341  & 1185         & 3136                                        \\ \hline
			\textbf{Thunderbird}     & 149  & 339         & 676                                    \\ \hline
			\textbf{Zookeeper}     & 50  & 693         & 959                           \\ \hline
			\textbf{HealthApp}     & 75  & 1179        & 1682                           \\ \hline
			\textbf{Spark}     & 36  & 1699        & 1360                                     \\ \hline
			\textbf{Windows}  & 50  & 963          & 1185                                            \\ \hline
			
			\textbf{OpenSSH}    & 27  & 729          & 692                               \\ \hline
			\textbf{OpenStack}     & 43  & 1548         & 1484                               \\ \hline
			\textbf{Proxifier}    & 8  & 1056        & 2284                                           \\ \hline
			\textbf{HPC }    & 46  & 381       & 485                            \\ \hline
			\textbf{Apache}     & 5  & 886         & 907                              \\ \hline
		\end{tabular}
\end{table}

\subsubsection{Implementation and Hyperparameter Setting} \label{ssec:para_setting}
In our experiments, our proposed LLMLog and all baselines are implemented in Python 3.9. For LLMs, we use GPT-4o~\cite{gpt4o} and Qwen2.5-7B-Instruct~\cite{qwen2.5} as our LLM backbones to conduct experiments due to their superior capabilities.
Specifically,  for  our proposed model LLMLog and all  ICL-based baselines, including Divlog~\cite{Divlog} and AdaICL~\cite{adaicl},  we set the total budget $B=50$  on five datasets, including HDFS, Proxifier, Apache, HPC and Windows, since they have fewer templates and words as shown in Table~\ref{tab:stat}.
For other datasets with more templates and words, we set the total budget $B=200$ instead. 
Specifically, for single-round annotation method Divlog, we perform DPP~\cite{dpp} algorithm to select 50 or 200 labeled logs following~\cite{Divlog}.
For multiple-round with fixed budget method AdaICL, 
 the budget per round is $10$ for $B=50$ and $40$ for $B=200$, respectively~\cite{adaicl}. In terms of LLMLog, we need manually set startup two rounds for adaptive budget which is $B_0=10$, $B_1=10$ for $B=50$ and $B_0=50$, $B_1=25$ for $B=200$ respectively. 
Also,  for baselines~\cite{drain,logppt,Divlog}, we maintain the default settings following~\cite{Divlog} with  labeled demonstration log number $k_c = 5$. 
For our  LLMLog, we set $\lambda = 0.5$ in Equation~\eqref{eq:multiroundannotation} and $\delta=0.5$ for all dataset. 

\begin{table*}[h]
	\centering
	
	\caption{Effectiveness (accuracy) over 16 log datasets on GPT-4o. 
		The \textbf{bold number} indicates the best performance.
	}	
	\label{tab:mainexp}
	\setlength\tabcolsep{4.2pt}
	\begin{tabular}{c|ccc|ccc|ccc|ccc|ccc}
		\hline
		
		\multirow{2}{*}{\textbf{Dataset}} & \multicolumn{3}{c|}{\textbf{Drain}}                                   & \multicolumn{3}{c|}{\textbf{LogPPT}}      & \multicolumn{3}{c|}{\textbf{DivLog}} & \multicolumn{3}{c|}{\textbf{AdaICL}}                                   & \multicolumn{3}{c}{\textbf{LLMLog (Ours)}}     \\
		& \multicolumn{1}{c}{\textbf{MLA}} & \multicolumn{1}{c}{\textbf{PTA}} &  \multicolumn{1}{c|}{\textbf{RTA}}  & \multicolumn{1}{c}{\textbf{MLA}} & \multicolumn{1}{c}{\textbf{PTA}} &  \multicolumn{1}{c|}{\textbf{RTA}} &  \multicolumn{1}{c}{\textbf{MLA}} & \multicolumn{1}{c}{\textbf{PTA}} &  \multicolumn{1}{c|}{\textbf{RTA}}   	& \multicolumn{1}{c}{\textbf{MLA}} & \multicolumn{1}{c}{\textbf{PTA}} &  \multicolumn{1}{c|}{\textbf{RTA}}    & \multicolumn{1}{c}{\textbf{MLA}} & \multicolumn{1}{c}{\textbf{PTA}} &  \multicolumn{1}{c}{\textbf{RTA}}  \\ \hline
		\textbf{Android} &73.0 & 56.6 &  \multicolumn{1}{c|}{62.0}  &76.7 & 58.4  &  \multicolumn{1}{c|}{68.4}  &63.8& 58.9  &  \multicolumn{1}{c|}{68.4} &97.8 & 89.4 &  \multicolumn{1}{c|}{92.1}  &\textbf{99.6} & \textbf{94.6}  &  \multicolumn{1}{c}{\textbf{96.4}}   \\ 
		\textbf{BGL} &44.4 & 33.9 &  \multicolumn{1}{c|}{30.8}  &97.0 & 68.6 &  \multicolumn{1}{c|}{78.3}  &94.0 & 68.4  &  \multicolumn{1}{c|}{77.5} &99.4 & 93.5 &  \multicolumn{1}{c|}{95.8}  &\textbf{99.9} & \textbf{95.1}  &  \multicolumn{1}{c}{\textbf{98.3}}   \\ 
		\textbf{Hadoop} &43.9 & 36.8  &  \multicolumn{1}{c|}{34.2}  &89.5 & 54.0  &  \multicolumn{1}{c|}{58.8}   &89.0 & 69.3  &  \multicolumn{1}{c|}{85.1} &99.4 & 92.2  &  \multicolumn{1}{c|}{97.4}  &\textbf{100.0} & \textbf{100.0}  &  \multicolumn{1}{c}{\textbf{100.0}}   \\ 
		\textbf{HDFS} &95.9 & 81.3  &  \multicolumn{1}{c|}{92.9}  &90.2 & 85.7  &  \multicolumn{1}{c|}{85.7}    &\textbf{100.0} & \textbf{100.0}  &  \multicolumn{1}{c|}{\textbf{100.0}}  &99.9 & 86.7  &  \multicolumn{1}{c|}{92.9}  &\textbf{100.0} & \textbf{100.0}  &  \multicolumn{1}{c}{\textbf{100.0}}   \\ 
		\textbf{Linux} &19.4 & 43.4 &  \multicolumn{1}{c|}{42.2}  &94.9 & 47.5  &  \multicolumn{1}{c|}{49.1} &97.3& 92.4  &  \multicolumn{1}{c|}{93.2} &99.7 & 96.6 &  \multicolumn{1}{c|}{96.6}  &\textbf{99.8} & \textbf{96.6}  &  \multicolumn{1}{c}{\textbf{98.3}}   \\ 
		\textbf{Mac} &27.2& 21.2  &  \multicolumn{1}{c|}{24.9}  &67.3& 43.6  &  \multicolumn{1}{c|}{53.4}    &62.4& 48.3  &  \multicolumn{1}{c|}{64.5}  &93.2& 74.4  &  \multicolumn{1}{c|}{82.1}  &\textbf{96.0}& \textbf{77.1}  &  \multicolumn{1}{c}{\textbf{85.9}}   \\ 
		\textbf{Thunderbird} &19.1 & 29.9 &  \multicolumn{1}{c|}{36.9}  &92.6 & 50.6  &  \multicolumn{1}{c|}{59.1}    &88.9& 86.8  &  \multicolumn{1}{c|}{92.6} &98.9 & 83.3 &  \multicolumn{1}{c|}{90.6}  &\textbf{99.9} & \textbf{93.3}  &  \multicolumn{1}{c}{\textbf{98.7}}   \\ 
		\textbf{Zookeeper} &49.8 & 39.1 &  \multicolumn{1}{c|}{36.0}  &99.0 & 74.1  &  \multicolumn{1}{c|}{86.0}   &\textbf{100.0} & \textbf{100.0}  &  \multicolumn{1}{c|}{\textbf{100.0}}  &\textbf{100.0} & \textbf{100.0}  &  \multicolumn{1}{c|}{\textbf{100.0}}  &\textbf{100.0} & \textbf{100.0}  &  \multicolumn{1}{c}{\textbf{100.0}}    \\ 
		\textbf{HealthApp} &24.1 & 8.3 &  \multicolumn{1}{c|}{34.7}  &78.9 & 85.3  &  \multicolumn{1}{c|}{85.3}   &99.9 & 98.7 &  \multicolumn{1}{c|}{98.7} &99.9 & 98.7 &  \multicolumn{1}{c|}{98.7}   &\textbf{100.0} & \textbf{100.0}  &  \multicolumn{1}{c}{\textbf{100.0}}  \\ 
		\textbf{Spark} &37.6 & 50.0 &  \multicolumn{1}{c|}{41.7}  &99.1 & 60.0 &  \multicolumn{1}{c|}{58.3} &82.1 & 48.3 &  \multicolumn{1}{c|}{77.8} &99.9 & 97.2 &  \multicolumn{1}{c|}{97.2}   &\textbf{100.0} & \textbf{100.0}  &  \multicolumn{1}{c}{\textbf{100.0}}   \\ 
		\textbf{Windows} &69.6 & 46.3  &  \multicolumn{1}{c|}{50.0}  &98.3 & 55.4  &  \multicolumn{1}{c|}{72.0}    &97.6& 55.9  &  \multicolumn{1}{c|}{76.0}  &99.9 &92.3 &  \multicolumn{1}{c|}{96.0}  &\textbf{100.0} & \textbf{100.0}  &  \multicolumn{1}{c}{\textbf{100.0}}  \\ 
		\textbf{OpenSSH} &53.4 & 52.0  &  \multicolumn{1}{c|}{50.0}  &97.6 & 48.9 &  \multicolumn{1}{c|}{84.6}  &99.9& 96.3  &  \multicolumn{1}{c|}{96.3} &99.9  & 96.3  &  \multicolumn{1}{c|}{96.3}  &\textbf{100.0} & \textbf{100.0}  &  \multicolumn{1}{c}{\textbf{100.0}}   \\ 
		\textbf{OpenStack} &18.0 & 5.5 &  \multicolumn{1}{c|}{39.5}  &90.7 & 84.4  &  \multicolumn{1}{c|}{88.4}   &96.9& 74.0 &  \multicolumn{1}{c|}{88.4} &\textbf{100.0} & \textbf{100.0}  &  \multicolumn{1}{c|}{\textbf{100.0}}   &\textbf{100.0} & \textbf{100.0}  &  \multicolumn{1}{c}{\textbf{100.0}}   \\ 
		\textbf{Proxifier} &52.7 & 26.9  &  \multicolumn{1}{c|}{87.5}  &\textbf{100.0} & \textbf{100.0}  &  \multicolumn{1}{c|}{\textbf{100.0}}  &96.5 & 14.3  &  \multicolumn{1}{c|}{75.0} &99.9& 77.8  &  \multicolumn{1}{c|}{87.5}  &\textbf{100.0} & \textbf{100.0}  &  \multicolumn{1}{c}{\textbf{100.0}}   \\ 
		\textbf{HPC} &67.2 & 38.8 &  \multicolumn{1}{c|}{41.3}  &94.7 & 73.6  &  \multicolumn{1}{c|}{84.8}  &97.5 & 42.6&  \multicolumn{1}{c|}{87.0}&98.6 & 62.5&  \multicolumn{1}{c|}{97.8}  &\textbf{100.0} & \textbf{100.0}  &  \multicolumn{1}{c}{\textbf{100.0}}  \\ 
		\textbf{Apache} &\textbf{100.0} & \textbf{100.0}  &  \multicolumn{1}{c|}{\textbf{100.0}}   &99.4 & 83.3  &  \multicolumn{1}{c|}{83.3}    &\textbf{100.0} & \textbf{100.0}  &  \multicolumn{1}{c|}{\textbf{100.0}}&\textbf{100.0} & \textbf{100.0}  &  \multicolumn{1}{c|}{\textbf{100.0}}  &\textbf{100.0} & \textbf{100.0}  &  \multicolumn{1}{c}{\textbf{100.0}}    \\ 
		\hline
	\end{tabular}
	
\end{table*}

All experiments are conducted on CentOS 7 with a 20-core Intel(R) Xeon(R) Silver4210 CPU@2.20GHz, 8 NVIDIA GeForce RTX 2080 Ti GPUs (11G), and 92G of RAM.

\subsection{Main Results} \label{ssec:main_results}

For main experiments, we evaluate the effectiveness by reporting MLA, PTA and RTA over 16 datasets on GPT-4o in Table \ref{tab:mainexp}. 
In terms of efficiency, we report the average template generation time for unlabeled logs and API cost for LLM-based approaches. Under the same dataset, the bold number indicates the best performance. 

\subsubsection{Effectiveness Evaluation}  \label{ssec:effective_exp}

We compare our LLMLog with state-of-the-art baselines using three metrics: MLA, PTA, and RTA on sixteen datasets, as shown in Table~\ref{tab:mainexp}.
Regarding MLA, LLMLog and AdaICL outperform DivLog by leveraging the benefits of multi-round annotation, enabling more effective and accurate log processing.
Our LLMLog outperforms AdaICL in terms of MLA on all datasets.
The reason is that, for HDFS and Proxifier with fewer templates, SED in LLMLog outperforms traditional cosine similarity in AdaICL by generating a higher-quality labeled log set for annotation. These annotated logs serve as effective demonstrations for a large number of unlabeled logs. Additionally, the adaptive demonstration strategy in LLMLog ensures that each word in the unlabeled logs is accurately processed, achieving higher accuracy even with a limited budget.
For datasets with diverse templates and words, such as Mac, LLMLog reduces redundant contexts, providing clear and sufficient context for each unlabeled log compared to fixed top-$k_c$ demonstrations.

In terms of template-level accuracy, the PTA and RTA metrics tend to be lower than MLA since they require all logs belonging to a template to be correctly predicted.
Specifically, if more words are mistakenly generated in predicted templates, PTA will drop because the total number of predicted templates increases.
Therefore, the lower PTA of DivLog and AdaICL compared to our proposed LLMLog reflects the false generation of word types, indicating that they provide low-quality contexts to unlabeled logs.
LLMLog ensures higher-quality contexts through both the SED-based representative score and the adaptive demonstration strategy. Thus, LLMLog performs better than the baselines on the PTA metric.
As for RTA, it evaluates how many templates in a dataset are correctly predicted.
Though Drain and LogPPT achieve 100 percent accuracy on Apache and Proxifier, respectively, their low RTA on other datasets illustrates their low generalization abilities.
In summary, LLMLog outperforms AdaICL on PTA and RTA, proving its suitability for large-scale log datasets.

\subsubsection{Efficiency Evaluation} \label{ssec:efficiency_exp}

We compare the template generation efficiency for LLMLog with baseline LLM-based methods.
Template generation efficiency is measured by average LLM prediction time for single log on each dataset, summarized in Table~\ref{tab:effexp}.
The average LLM prediction time for LLMLog is less than $0.8$ seconds on all 16 datasets while the time for DivLog and AdaICL is larger than 1 second.
As all three methods are measured under the same API, the less time consumption reflects the fewer number of input tokens. 
The adaptive demonstration selection minimizes the number of example logs based on word coverage, which reduces the number of input tokens in contextual demonstration compared to fixed top-$k_c$ logs in DivLog and AdaICL.
On datasets with fewer words and templates like Apache, Windows and HPC, adaptive demonstration can cover all the words within 1 or 2 log. Compared to $k_c=5$ setting in DivLog and AdaICL, the generation time of our proposed LLMLog is significantly decreased.
On datasets with more words and templates like mac, it requires more example logs to cover the words. The difference of generation time between AdaICL and LLMLog is less than that on simpler datasets.
API Cost experiments also prove the claim.
\begin{table}[t]
	\centering
	\small
	\caption{Template generation time (time with seconds) and API Cost (in USD) across 16 log datasets. 
		The \textbf{bold number} indicates the most efficient and lower API cost results.
	}
	\setlength\tabcolsep{1pt}
	\label{tab:effexp}
	
	\begin{tabular}{c|ccc|ccc}
		\hline
		& \multicolumn{3}{c|}{\textbf{Generation Time (s)}}& \multicolumn{3}{c}{\textbf{API Cost (USD)}} \\
		\textbf{Dataset}  & \textbf{DivLog} & \textbf{AdaICL}  & \textbf{LLMLog}  & \textbf{DivLog}  & \textbf{AdaICL} & \textbf{LLMLog}    \\ \cline{1-7}
		\textbf{Android} &1.1 & 1.1  & \textbf{0.7}    & 3.5 & 3.5        &   \textbf{2.1}   \\ 
		\textbf{BGL} &1.4 & 1.1 & \textbf{0.8}   &  3.8  &  3.7        &    \textbf{2.8}  \\ 
		\textbf{Hadoop} &1.1 & 1.0  & \textbf{0.6}   &    3.7 &    3.7      &   \textbf{2.0}    \\ 
		\textbf{HDFS} &1.0& 1.0  &  \textbf{0.7}   &7.6   &7.6        &    \textbf{5.1} \\ 
		\textbf{Linux} &1.2 & 1.1 &  \textbf{0.7}    &   3.1 &   3.1        &    \textbf{2.0}  \\ 
		\textbf{Mac} &1.1 &1.1   &  \textbf{0.8}  &    5.5  &    5.4        &  \textbf{5.0}  \\ 
		\textbf{Thunderbird} &1.1 & 1.0 &  \textbf{0.6} &  3.3  &  3.2          &   \textbf{2.7}  \\
		\textbf{Zookeeper} &1.1 & 1.1 &  \textbf{0.3} &    3.1  &    3.1       &   \textbf{2.1}  \\
		\textbf{HealthApp}  &1.8 & 1.4  & \textbf{0.8}    &  4.4    &  3.7    &  \textbf{2.3}    \\ 
		\textbf{Spark} &2.1& 1.5 & \textbf{1.4}  &6.5    &5.7       &  \textbf{3.3}   \\ 
		\textbf{Windows} &2.2 & \textbf{0.7} & \textbf{0.7}  &   5.3 &   5.3       &   \textbf{2.6}  \\ 
		\textbf{OpenSSH }&1.1 & 1.1  &  \textbf{0.7}  &    7.8  &    7.8        &      \textbf{2.5}  \\ 
		\textbf{OpenStack}  &1.8 & 1.9 &  \textbf{1.0}   &   9.8  &   10.0          &   \textbf{4.2}   \\ 
		\textbf{Proxifier} &0.9 &0.9   &  \textbf{0.8} &   5.7  &   5.7    &   \textbf{3.5}       \\ 
		\textbf{HPC}&1.6& 0.5 &  \textbf{0.3}  & 2.6      & 1.8         &      \textbf{1.3}   \\
		\textbf{Apache} &1.8 & 0.5&  \textbf{0.4} &   2.7    &   2.4        & \textbf{1.6}       \\
		\hline
	\end{tabular}
\end{table}
\begin{table*}[t]
	\centering
	\color{black}
	\small
	\caption{\color{black}Ablation study on GPT-4o. }
	\label{tab:ablation}
	\setlength\tabcolsep{1pt}
	
	\renewcommand{\arraystretch}{1.5}
	\begin{tabular}{c|ccc|ccc|ccc|ccc}
		\hline
		\multirow{2}{*}{\diagbox{\textbf{Dataset}}{\textbf{Model}}} & \multicolumn{3}{c|}{\textbf{Mac}} & \multicolumn{3}{c|}{\textbf{BGL}} & \multicolumn{3}{c|}{\textbf{Hadoop}} & \multicolumn{3}{c}{\textbf{Proxifier}} \\ \cline{2-13}
		& \textbf{MLA}   & \textbf{PTA} &  \textbf{RTA}  & \textbf{MLA}   & \textbf{PTA} &  \textbf{RTA}  & \textbf{MLA}   & \textbf{PTA} &  \textbf{RTA}   & \textbf{MLA}   & \textbf{PTA} &  \textbf{RTA}    \\ \cline{1-13}

		\textbf{LLMLog\textbackslash SED}     & $93.7_{(-3.3)}$   & $65.6_{(-15.7)}$        & $77.1_{(-9.1)}$ & $87.2_{(-12.7)}$   & $35.3_{(-61.4)}$        & $52.5_{(-45.8)}$ & $95.5_{(-4.5)}$   & $85.4_{(-14.6)}$        & $92.1_{(-7.9)}$ & $75.7_{(-24.3)}$   & $37.5_{(-62.5)}$        &$37.5_{(-62.5)}$  \\
		\textbf{LLMLog\textbackslash RS}      & $93.9_{(-3.1)}$   & $66.5_{(-14.8)}$        & $77.7_{(-8.5)}$ & $94.5_{(-5.4)}$   & $83.0_{(-13.7)}$        & $89.2_{(-9.1)}$ & $94.1_{(-5.9)}$   & $82.9_{(-17.1)}$        & $97.4_{(-2.6)}$ & $99.4_{(-0.6)}$   & $37.5_{(-62.5)}$        &$37.5_{(-62.5)}$  \\ 
		\textbf{LLMLog\textbackslash PC} & $96.0_{(-1.0)}$   & $77.10_{(-4.2)}$        & $85.9_{(-0.3)}$  & $99.0_{(-0.9)}$  & $93.5_{(-3.2)}$   & $95.8_{(-2.5)}$ & $99.5_{(-0.5)}$   & $98.3_{(-1.7)}$        & $99.1_{(-0.9)}$ & $100.0_{(-0.0)}$   & $100.0_{(-0.0)}$       & $100.0_{(-0.0)}$\\
		\textbf{LLMLog\textbackslash AD}      & $93.1_{(-3.9)}$  & $67.6_{(-13.7)}$       & $76.3_{(-9.9)}$ & $97.3_{(-2.6)}$  & $93.5_{(-3.2)}$       & $96.7_{(-1.6)}$ & $99.5_{(-0.5)}$  & $92.6_{(-7.4)}$       & $98.2_{(-1.8)}$ & $100.0_{(-0.0)}$   & $100.0_{(-0.0)}$       & $100.0_{(-0.0)}$\\ 	
		\textbf{LLMLog\textbackslash AB}      & $96.9_{(-0.1)}$  & $80.2_{(-1.1)}$       & $86.2_{(-0.0)}$ & $97.6_{(-3.2)}$  & $90.6_{(-6.1)}$       & $96.7_{(-1.6)}$ & $100.0_{(-0.0)}$   & $100.0_{(-0.0)}$       & $100.0_{(-0.0)}$ & $100.0_{(-0.0)}$   & $100.0_{(-0.0)}$       & $100.0_{(-0.0)}$\\ \hline
				\textbf{LLMLog}     & \textbf{97.0} & \textbf{81.3}        &\textbf{86.2} & \textbf{99.9}  & \textbf{96.7}        &\textbf{98.3} & \textbf{100.0}  & \textbf{100.0}        & \textbf{100.0} & \textbf{100.0}  & \textbf{100.0}        & \textbf{100.0}\\ \hline
	\end{tabular}
\end{table*}

\subsubsection{API Cost Evaluation} \label{ssec:cost_exp}
{\color{black}
 We compare the total API monetary costs for LLMLog and the state-of-the-art baselines, DivLog~\cite{Divlog} and AdaICL~\cite{adaicl}, as summarized in Table~\ref{tab:effexp}.
 The cost of GPT-4o is approximately \$3.6 USD per one million tokens.
In Table~\ref{tab:effexp}, the cost of our LLMLog ranges from \$1 to \$5 USD per dataset, where the cost of processing each log is only \$0.0025-\$0.005 USD. Therefore, the cost of LLMLog is both cheap and practical.
 Moreover, LLMLog incurs less API cost than the state-of-the-art baselines, as it adaptively selects the number of labeled logs to use as demonstrations for each unlabeled log, whereas AdaICL relies on a fixed number of demonstrations. By eliminating unnecessary log demonstrations, LLMLog significantly reduces the input token length for LLMs, further lowering the computational cost.
 LLMLog can effectively reduce costs for simple datasets like HDFS and Proxifier since the words can be adequately covered by one or two logs.
 As for complex datasets like Mac, the amount of cost savings is smaller while still better than the baselines, as each unlabeled log requires more example logs to cover the words.}

\subsection{Ablation Study}\label{ssec:ablation}
{\color{black}
This subsection analyzes the impact of components in LLMLog. For the log annotation, it introduces the SED metric to calculate the representative score of logs. Logs are selected for annotation by optimizing a weighted combination of the LLM's prediction confidence and the representative score. In the adaptive demonstration selection component, Algorithm \ref{alg:adaptivedemon} is proposed to adaptively determine suitable contexts for each unlabeled log. Additionally, an adaptive budget strategy (Equation \eqref{eq:avgbudget}) dynamically allocates the annotation budget for each round.
To verify these designs,
we denote LLMLog with SED replaced by cosine similarity as \textbf{LLMLog\textbackslash SED}, without the representative score as \textbf{LLMLog\textbackslash RS}, without LLM prediction confidence as \textbf{LLMLog\textbackslash PC}, with Algorithm~\ref{alg:adaptivedemon} replaced by a fixed top-$k_c$ strategy as \textbf{LLMLog\textbackslash AD}, and with Equation~\eqref{eq:avgbudget} replaced by a fixed budget for each round as \textbf{LLMLog\textbackslash AB}. For \textbf{LLMLog\textbackslash AB}, we set the budget for each round to $10$ for Proxifier and $40$ for Mac, BGL, and Hadoop. We conduct experiments on 4 datasets with different template distributions, including Mac, BGL, Hadoop, and Proxifier.

As shown in Table~\ref{tab:ablation}, \textbf{LLMLog\textbackslash SED} suffers from a significant accuracy drop because SED eliminates redundant words in logs, treating logs with the same template as similar. Regarding the representative score, the performance decrease of \textbf{LLMLog\textbackslash RS} shows that logs similar to the majority provide useful contexts for generating templates. On the other hand, \textbf{LLMLog\textbackslash PC} performs slightly better by selecting challenging logs for the LLM, but these logs are less representative, making their overall impact smaller.
Regarding the adaptive demonstration strategy, \textbf{LLMLog\textbackslash AD} shows reduced accuracy as template complexity increases. While a fixed top-$k_c$ strategy works well for simpler datasets, it struggles on complex datasets like Mac, where insufficient context leads to irregular processing and more generated templates. This results in a larger accuracy drop compared to simpler datasets like PTA. For the adaptive budget strategy, \textbf{LLMLog\textbackslash AB} shows that both fixed and adaptive strategies perform well on simpler datasets. However, in complex datasets with more templates and words, the adaptive strategy limits annotations per round, selecting labeled logs that are more diverse in word count and template variety.}

\subsection{Parameter Sensitivity} \label{ssec:para}
	We  evaluate the effectiveness of different hyper-parameter settings of LLMLog over two datasets, Hadoop and Proxifier.

\subsubsection{Annotation Budget $B$ in Algorithm~\ref{alg:adaptivedemon} } \label{sssec:budget_para}
The annotation budget $B$ represents the total budget for human annotation.
We vary $B$ within $\{50, 100, 150, 200, 250\}$.
On the Hadoop dataset in Figure~\ref{fig:hyperparameter_sensitivity}~(a), the three accuracy metrics increase first and then stabilize at $B=200$, indicating $B=200$ is sufficient to cover most templates on Hadoop.
Besides, the increment of PTA is larger than other two metrics due to the LLM falsely generating incorrect word types under insufficient $B$.
As Proxifier is a much simpler dataset containing only $4$ templates in Figure~\ref{fig:hyperparameter_sensitivity}~(b), all three metrics stabilize at $B=50$ which can provide sufficient contextual information for each word in unlabled logs.
To sum up, the performance stabilizes after $B$ is large enough, rather than peaking, making tuning easier.

\subsubsection{The weight $\lambda$ in Equation~\eqref{eq:multiroundannotation}} \label{ssec:lambdapara}
$\lambda$ is a trade-off parameter between the representative score and the LLM prediction confidence. 
We vary $\lambda\in \{0, 0.25, 0.5, 0.75, 1\}$.
As demonstrated in Figure~\ref{fig:hyperparameter_sensitivity}(c), MLA, PTA, and RTA initially increase and subsequently decrease over Hadoop, peaking within $0.25$ to $0.75$.
Small value of $\lambda$ undervalues the impact of LLM confidence, resulting in the selection of annotation logs with redundant information. In contrast, large $\lambda$ prioritizes logs with low template generation confidence.
In Figure~\ref{fig:hyperparameter_sensitivity}~(d), most logs are easily identified using representative contextual information, achieving 100\% accuracy even when $\lambda = 0$ over Proxifier. However, relying solely on LLM prediction confidence ($\lambda = 1$) causes LLMLog to focus only on low-confidence logs which is not appropriate even over simple datasets.

\subsubsection{The threshold $\delta$ in Equation~\eqref{eq:repre_set}}

$\delta$ controls the threshold of representative score.
We vary it within $\{0, 0.25, 0.5, 1.0\}$.
As shown in Figure~\ref{fig:hyperparameter_sensitivity}~(e), MLA, PTA, and RTA exhibit a trend of rising and then falling, roughly peaking at $\delta=0.5$. A low threshold underestimates the informativeness of logs. The annotation focuses excessively on LLM confidence.
On the contrary, a high threshold causes only a subset of unlabeled logs obtaining enough context.

 \begin{figure*}[t]
	
	\color{black}
	\centering 	
	\subfloat[$B$ over Hadoop]	{\centering\includegraphics[width = 0.2\linewidth]{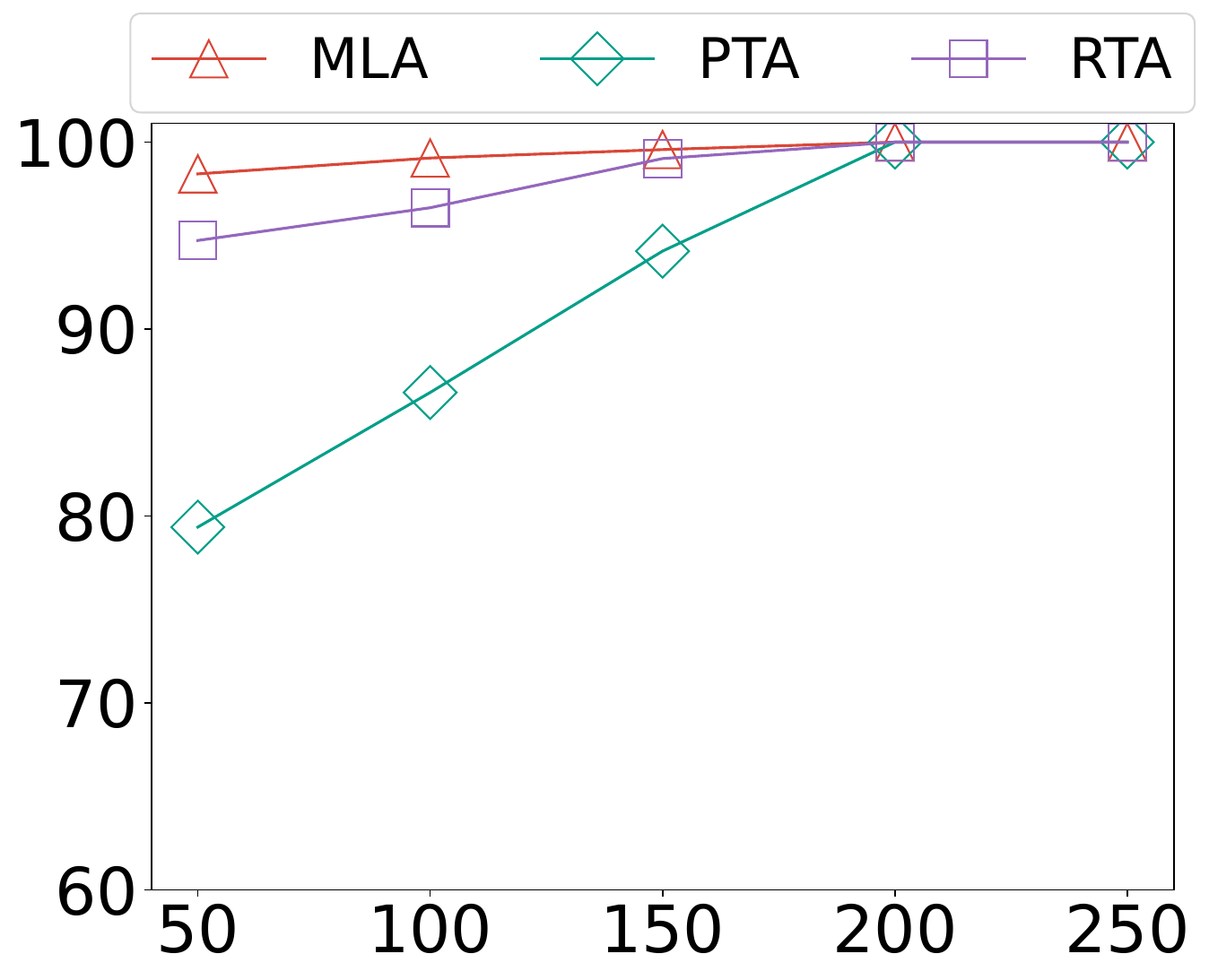}}
	\subfloat[$B$ over Proxifier]	{\centering\includegraphics[width = 0.2\linewidth]{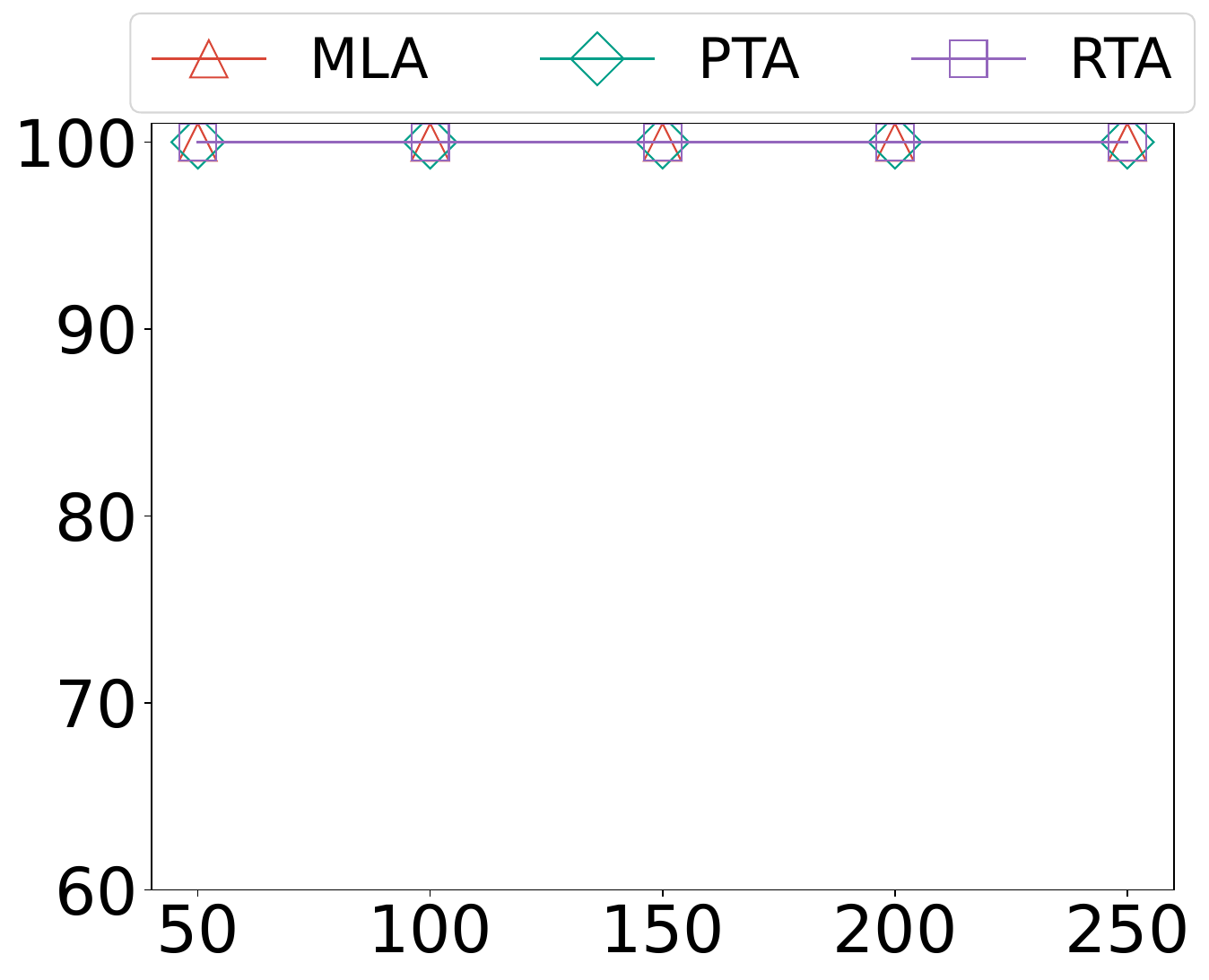}}
	\subfloat[$\lambda$ over Hadoop]	{\centering\includegraphics[width = 0.2\linewidth]{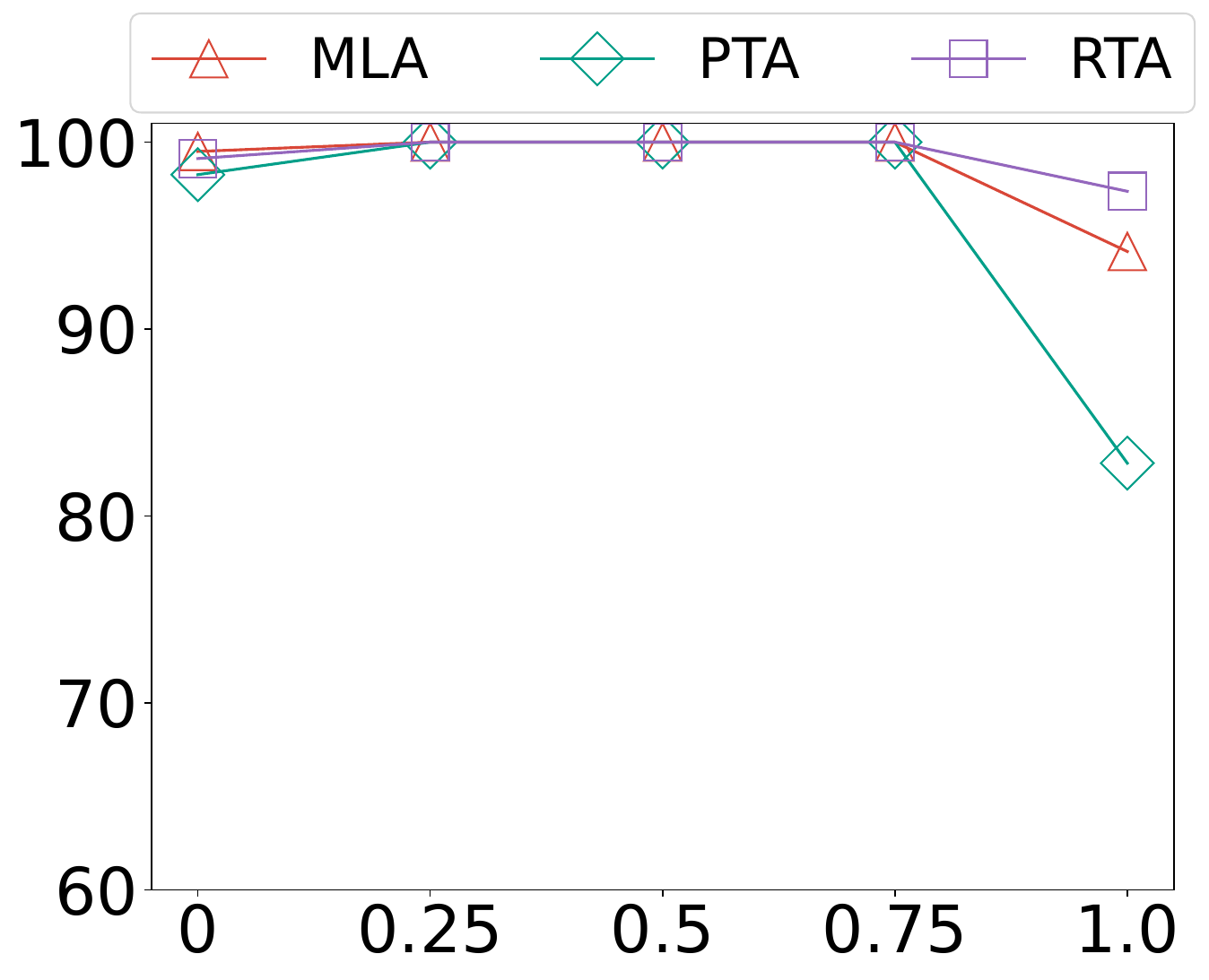}}
	\hfill
	\subfloat[$\lambda$ over Proxifier]	{\centering\includegraphics[width = 0.2\linewidth]{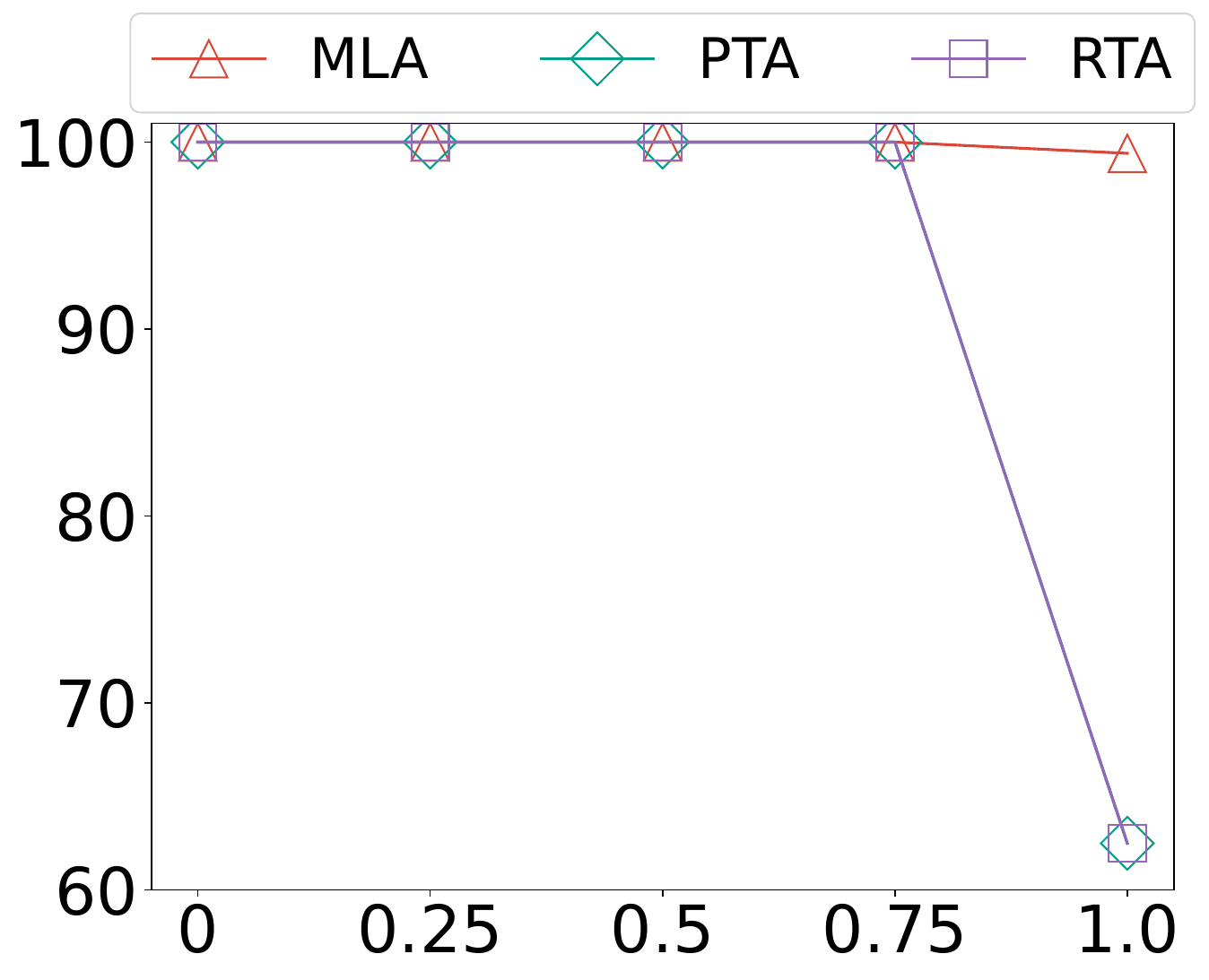}}
	\hfill
	\subfloat[$\delta$ over Hadoop]	{\centering\includegraphics[width = 0.2\linewidth]{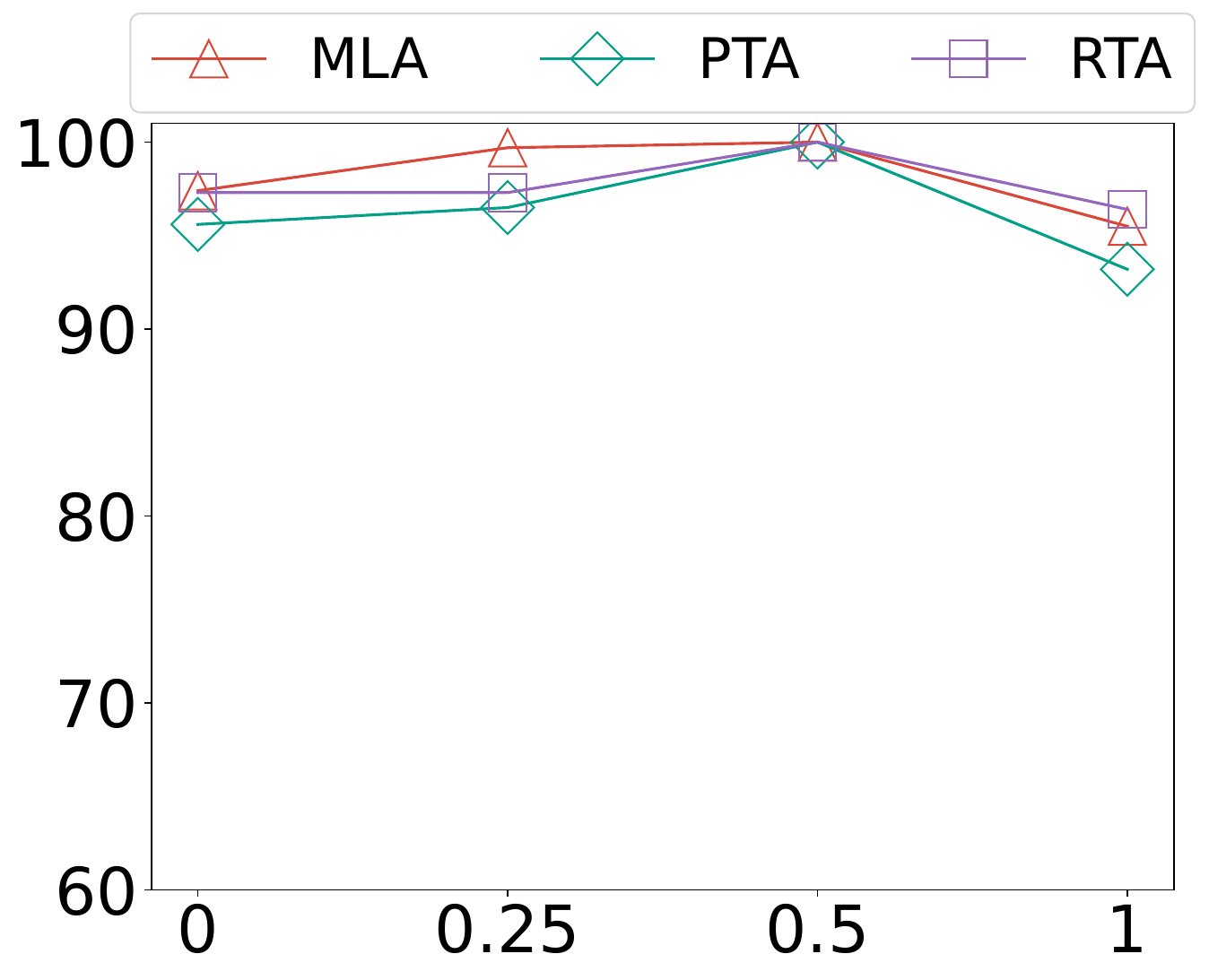}}
	\caption{Parameter sensitivity evaluations}
	\label{fig:hyperparameter_sensitivity}
\end{figure*}

 \begin{figure}[t]
	\centering 	
	\subfloat[Word similarity over Hadoop]	{\centering\includegraphics[width = 0.5\linewidth]{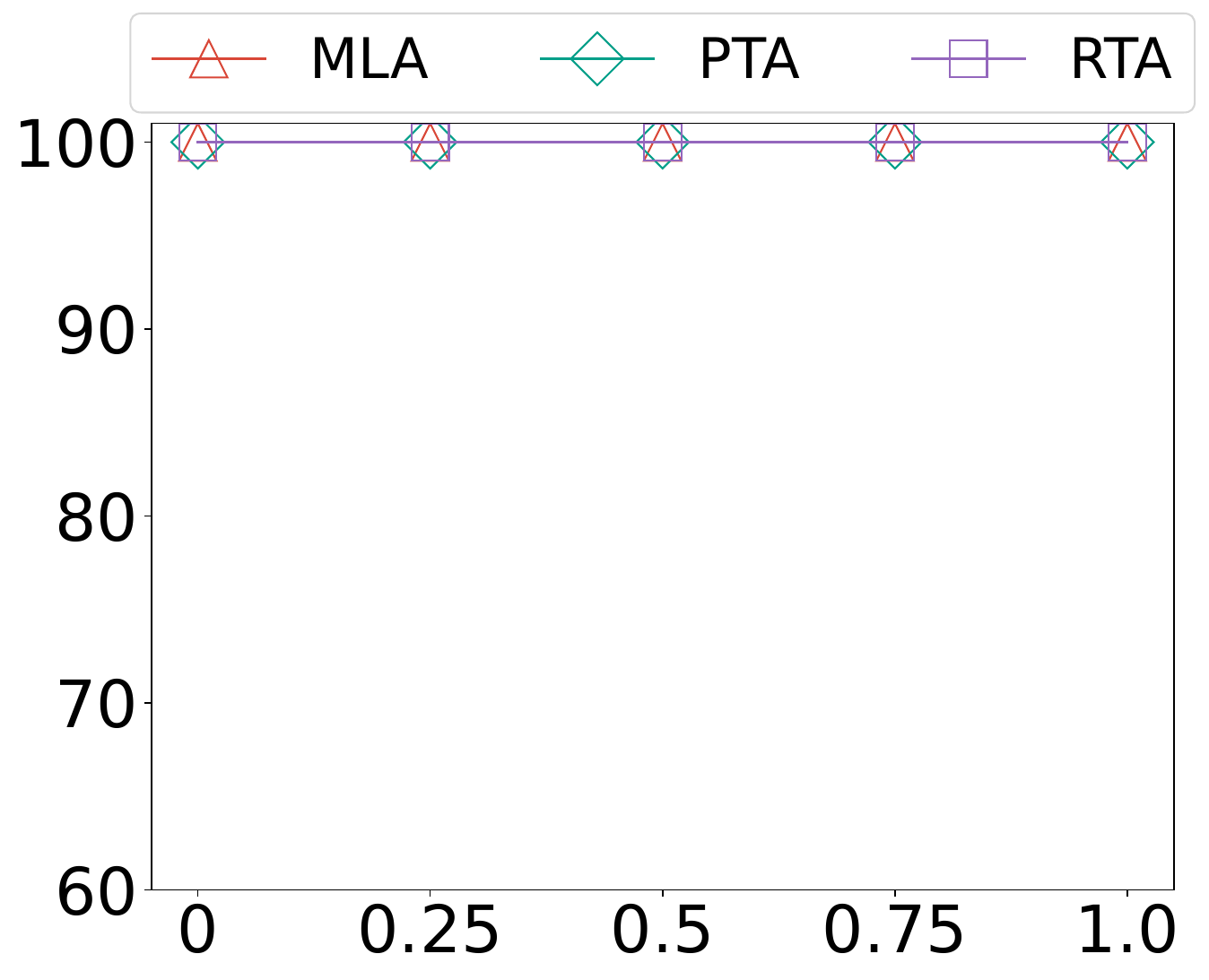}}
	\subfloat[Word similarity over Proxifier]	{\centering\includegraphics[width = 0.5\linewidth]{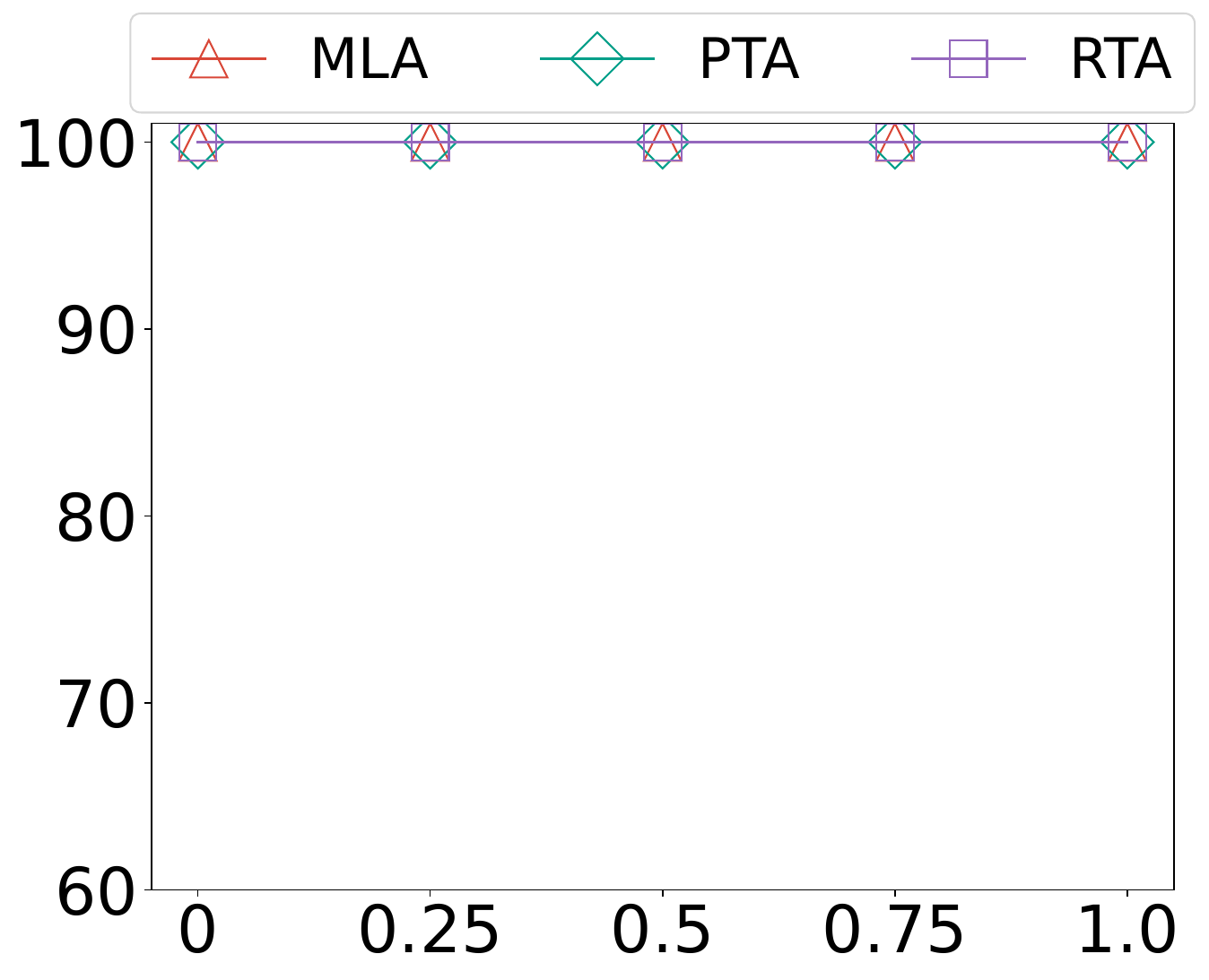}}
	\caption{Parameter sensitivity for word similarity}
	\label{fig:similarity_sensitivity}
 
\end{figure}

\subsubsection{Cosine Similarity Threshold in Equation~\eqref{eq:cost} and~\eqref{eq:adaptivedemon}}
Cosine similarity $cosine(\cdot)$ measures word similarity. Two words are considered similar if their embedding cosine similarity is greater than the threshold $0$.
We vary it within $\{0, 0.25, 0.5, 0.75, 1\}$. As shown in Figure~\ref{fig:similarity_sensitivity}~(a) and (b), the effect of varying threshold of cosine similarity has converged over two datasets.
This implies that LLMLog is robust to different settings of word similarity under a certain budget. Since the total number of distinct words is relatively small in system events, it is sufficient to distinguish words by $0$.

\subsection{Case Study} \label{ssec:case_study}
Hallucination is that LLM generates outputs without following the prompts, which is a common problem in LLM-related tasks~\cite{Divlog,adaicl,vldbicl1,vldbicl2}.
There are mainly two types of error caused by hallucination. 
The first is \textbf{generation error} that LLM falsely generates or deletes words in input logs. For instance, input log is \texttt{rts: kernel terminated for reason 1004} with the ground truth \texttt{rts: kernel terminated for reason [CODE]}. However, LLM may predict \texttt{rts: kernel terminated} where \texttt{for reason 1004} are falsely deleted. 
The second case is \textbf{word error} that even the type of a target word is included in prompt, LLM still makes wrong predictions.
 For example, input log is \texttt{rts: kernel terminated for reason 1004} and the prompt has instructed to replace \texttt{1004} to word type \texttt{[CODE]}, LLM still mistakenly remains \texttt{1004} in predicted template.
To investigate how confidence score in Equation~\eqref{eq:LLMhardness} help alleviate the hallucination issue, we vary two hyperparameters related to the confidence score. One is $\lambda$ in Equation~\eqref{eq:multiroundannotation}, the trade-off parameter for prediction confidence.
The other is $a$ in Equation~\eqref{eq:LLMhardness}, the trade-off parameter for the effect of token probability in the prediction confidence score.
First, we vary $\lambda \in \{0, 0.25, 0.5\}$.
As shown in Figure~\ref{fig:confidence}~(a), as $\lambda$ increases, both generation error and word error are reduced, implying that the confidence score can effectively select several "hard" logs, allowing human annotation to replace the LLM's hallucinated output.
On the other hand, we vary $a \in \{0.2, 0.5, 0.8\}$.
As shown in Figure~\ref{fig:confidence}~(b), as $a$ increases, logs with low prediction probability are included in the labeled log set. Thus, the labeled log set becomes effective at preventing word errors associated with low prediction probabilities. However, generation errors increase because several word-inconsistent error logs cannot be selected due to the decreasing weight of word consistency.

 \begin{figure}[t]
	\centering
	\subfloat[Error cases with $\lambda$]	{\centering\includegraphics[width = 0.5\linewidth]{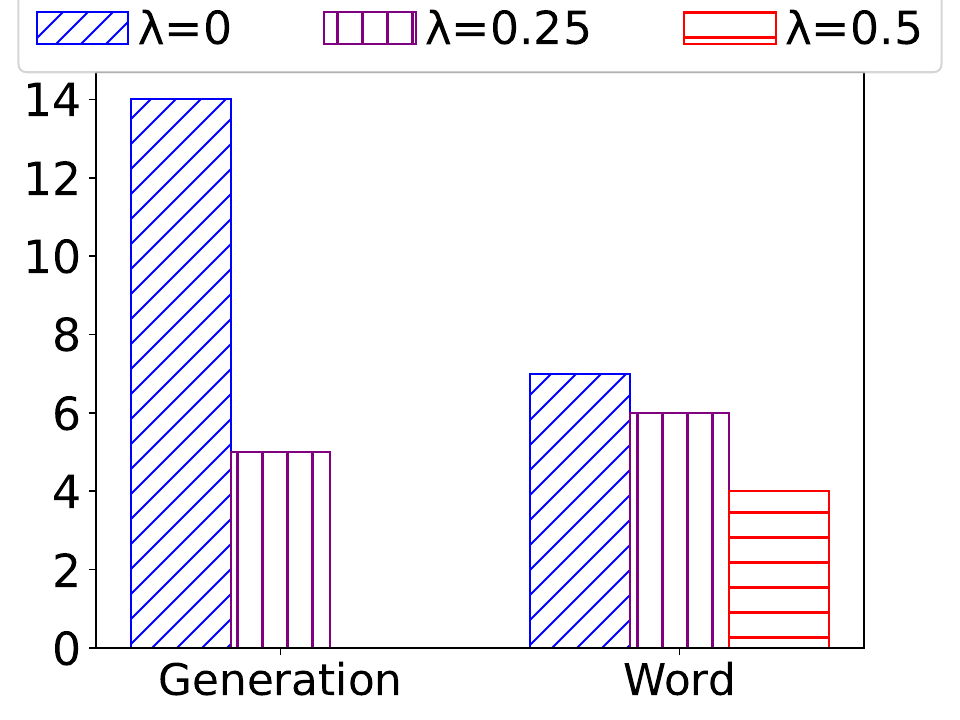}}
	\subfloat[Error cases with $a$]	{\centering\includegraphics[width = 0.5\linewidth]{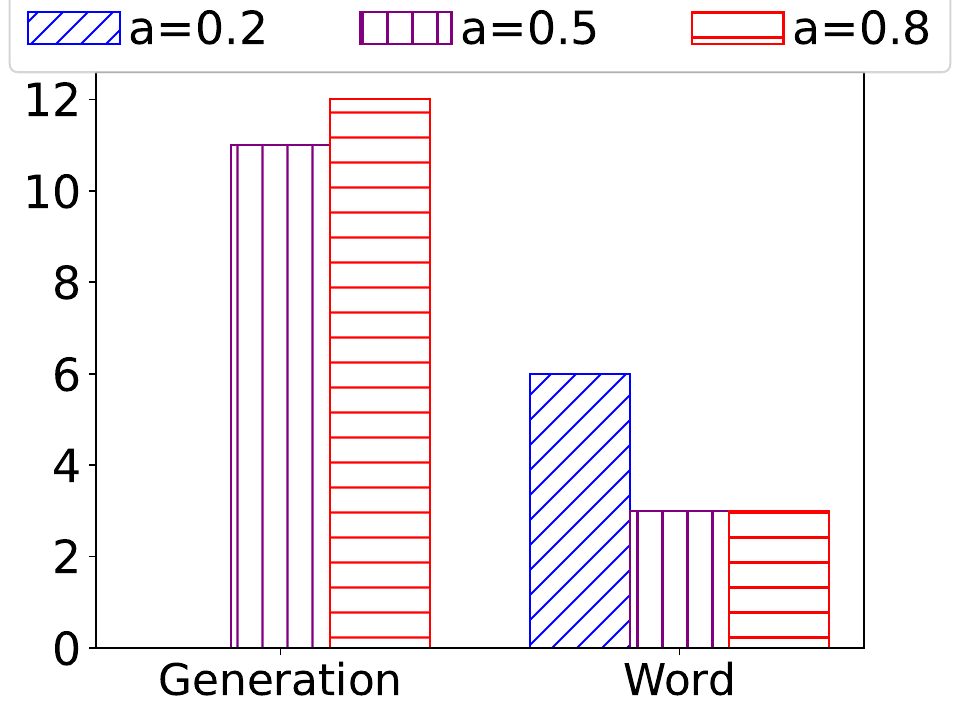}}
	\caption{Confidence score on hallucination errors}
	\label{fig:confidence}
\end{figure}

\section{Conclusion}\label{sec:conclusion}
In this paper, we present LLMLog, an LLM-driven multi-round annotation framework with adaptive in-context learning for log template generation. 
Firstly, we propose a distance metric to measure the log similarity, along with a confidence metric to assess the difficulty faced by LLM. 
Based on the two metrics, we identify the most valuable unlabeled logs for human annotation in each round. Second, we introduce an adaptive approach for selecting demonstrative contexts of each log to generate more accurate templates by LLM. Experimental results demonstrate that LLMLog achieves superior performance compared to the state-of-the-art baselines.

\begin{acks}
	Lei Chen's work is partially supported by National Key Research and Development Program of China Grant No. 2023YFF0725100, National Science Foundation of China (NSFC) under Grant No. U22B2060, Guangdong-Hong Kong Technology Innovation Joint Funding Scheme Project No. 2024A0505040012, the Hong Kong RGC GRF Project 16213620, RIF Project R6020-19, AOE Project AoE/E-603/18, Theme-based project TRS T41-603/20R, CRF Project C2004-21G, Key Areas Special Project of Guangdong Provincial  Universities 2024ZDZX1006,  Guangdong Province Science and Technology Plan Project 2023A0505030011, Guangzhou municipality big data intelligence key lab, 2023A03J0012, Hong Kong ITC ITF grants MHX/078/21 and PRP/004/22FX, Zhujiang scholar program 2021JC02X170, Microsoft Research Asia Collaborative Research Grant, HKUST-Webank joint research lab and 2023 HKUST Shenzhen-Hong Kong Collaborative Innovation Institute Green Sustainability Special Fund, from Shui On Xintiandi and the InnoSpace GBA.
\end{acks}
\balance
\bibliographystyle{ACM-Reference-Format}
\bibliography{sample}
\clearpage

\section{Appendix}

\renewcommand{\proofname}{Proof}
\subsection{Proof for Theorem~\ref{theo:multiroundannotation}} \label{appendix:multiroundannotation}

\begin{proof}
	We prove Theorem~\ref{theo:multiroundannotation} by reduction from the Max Coverage problem~\cite{maxcover} to labeled log annotation problem. 
	Formally, an instance of max coverage problem is defined as $m$ subsets, i.e. $S_1$, ..., $S_m$ and a budget $k$.
	The goal is to choose at most $k$ subsets to maximize the number of covered elements $|\bigcup_{i \in J} S_i |$, where $J$ is the index of chosed subsets. 
	We can construct an instance of multiple-round labeled log annotation with $\lambda = 0$. The objective now focuses solely on maximizing the total cardinality of $|\bigcup_{i=0}^{k} I_{s_i} |$.
	Regarding $I_{s_i}$ as $S_i$ in max coverage problem, we can transform an instance of multiple-round labeled log annotation to an instance for max coverage problem. 
	If we can solve multiple-round labeled log annotation problem, we can directly find a solution to max coverage problem.
	Since max coverage problem is known as an NP-hard problem, multiple-round labeled log annotation problem is also NP-hard, which completes the proof.
\end{proof}

\subsection{Proof for Theorem~\ref{theo:annotationselection}} \label{appendix:annotationselection}

\begin{proof}
	We first prove the following lemma. 
	\begin{myLem}\label{lemma:is}
		The  $\triangle IS(s|L_r)$ in Equation~\ref{eq:marginal_score} is monotone  increasing and submodular.
	\end{myLem}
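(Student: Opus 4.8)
The plan is to show that the objective $IS(\cdot)$ in Equation~\eqref{eq:multiroundannotation} is a nonnegative combination of a coverage function and a modular function, and that both pieces are monotone and submodular; the claimed properties of the marginal gain $\triangle IS(s|L_r)$ in Equation~\eqref{eq:marginal_score} then follow at once. First I would fix the intended reading of the objective, in which the representativeness term is evaluated \emph{once} over the whole selected set rather than re-summed per element, namely
\begin{align*}
IS(L_r) = (1-\lambda)\,\frac{\bigl|\bigcup_{s_i \in L_r} I_{s_i}\bigr|}{|U|} + \lambda \sum_{s_i \in L_r} C(s_i,\hat{t}_i,\hat{s}_i),
\end{align*}
where $I_{s_i}$ is the representative set of Equation~\eqref{eq:repre_set} and $C(\cdot)$ is the confidence score of Equation~\eqref{eq:LLMhardness}. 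Write $f_1(L_r)=|\bigcup_{s_i\in L_r} I_{s_i}|/|U|$ for the coverage term and $f_2(L_r)=\sum_{s_i\in L_r} C(s_i,\hat{t}_i,\hat{s}_i)$ for the confidence term.

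Next I would handle the two terms separately. The term $f_1$ is a normalized set-cover objective over the ground set $U$: adding a log $s$ can only enlarge the union $\bigcup I_{s_i}$, so $f_1$ is monotone, and its marginal gain equals $|I_s\setminus\bigcup_{s_i\in L_r} I_{s_i}|/|U|$, counting only the elements of $I_s$ not yet covered. Since $L_r\subseteq L_r'$ forces $\bigcup_{s_i\in L_r} I_{s_i}\subseteq\bigcup_{s_i\in L_r'} I_{s_i}$, the set of newly covered elements can only shrink as the selected set grows, which is exactly the diminishing-returns inequality; this is the standard argument that coverage functions are submodular. The term $f_2$ is modular (a plain sum of per-log scores), hence monotone because each $C(\cdot)\ge 0$, and trivially submodular because its marginal gain $\lambda\,C(s,\hat{t},\hat{s})$ does not depend on $L_r$.

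Finally I would combine the pieces: since $\lambda\in(0,1)$, both weights $1-\lambda$ and $\lambda$ are nonnegative, and a nonnegative linear combination of monotone submodular functions is again monotone submodular. Concretely the marginal gain decomposes as
\begin{align*}
\triangle IS(s|L_r) = (1-\lambda)\,\frac{\bigl|I_s \setminus \bigcup_{s_i\in L_r} I_{s_i}\bigr|}{|U|} + \lambda\, C(s,\hat{t},\hat{s}),
\end{align*}
which is always nonnegative (monotonicity) and non-increasing as $L_r$ grows (submodularity), establishing the lemma. The one real subtlety — and the step I would be most careful about — is the interpretation of the objective: if the coverage term were literally re-summed over every $s_i\in L_r$ it would be multiplied by the growing factor $|L_r|$ and submodularity would break, so I would state explicitly that the coverage is counted once, which is the reading consistent with the max-coverage reduction used in the proof of Theorem~\ref{theo:multiroundannotation}.
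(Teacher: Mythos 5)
Your proposal is correct and follows essentially the same route as the paper's own proof: monotonicity from the non-negativity of the coverage and confidence contributions, and submodularity from the diminishing-returns property of the coverage term plus the modularity of the confidence term. If anything, your version is the more careful one --- you write the marginal coverage gain correctly as $\bigl|I_s \setminus \bigcup_{s_i\in L_r} I_{s_i}\bigr|/|U|$ (the paper's sketch writes the full union and has a sign slip in its submodularity inequality), and you explicitly pin down the reading of Equation~\eqref{eq:multiroundannotation} in which the coverage term is counted once over the set, which is the interpretation the paper's max-coverage reduction implicitly relies on.
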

	
	\begin{proof}
		We prove the the  $\triangle IS(s|L_r)$ is monotone and submodular. as follows. 
		\begin{itemize}[leftmargin=10pt]
			\item {\textbf{Monotone increasing}: } 
			Given any $s \in U$ and selected logs $L_r$, we can obtain $ IS( L_r \cup \{s\})-IS(L_r)\geq 0$, since representative score in Equation~\eqref{eq:repre_set} and the LLM prediction confidence score in Equation~\eqref{eq:LLMhardness} is all non-negative for a new $s$ for selected log set $L_r$.
			Thus, $\triangle IS(s|L_r)$  is
			monotone increasing.
			\item {\textbf{Submodularity}: }
			Given  ${\tilde{L}}'_r \subset {L}'_r$, and $s_i \in U$ and $s_i \notin {\tilde{L}}'_r, {L}'_r$, 
			we can obtain: (omit weight for simplicity)
			\begin{align}
				\triangle  IS(s_i|{L}'_r)= C(s_i,\hat{t}_i,\hat{s}_i) + \frac{|\bigcup_{l=0}^{|{L}'_r|} I_{l} \cup I_{s_i} |}{|U|} \\
				\triangle  IS(s_i|{\tilde{L}}'_r)=  C(s_i,\hat{t}_i,\hat{s}_i) + \frac{|\bigcup_{l=0}^{|{\tilde{L}}'_r|} I_{l} \cup I_{s_i} |}{|U|}
			\end{align}
			Since  the marginal score satisfies the inequality $ \triangle IS(s_i|{L}'_r) - \triangle IS(s_i|{\tilde{L}}'_r) =   \frac{|\bigcup_{l=0}^{{L}'_r} I_{l} \cup I_{s_i} |-|\bigcup_{l=0}^{{\tilde{L}}'_r} I_{l} \cup I_{s_i} |}{|U|}\le 0$,  we can obtain the inequality as:
			
			\begin{small}
				\begin{align}
					IS(\tilde{L}'_r \cup {s_i}) -  IS(\tilde{L}'_r) \ge  	IS({L}'_r \cup {s_i}) -  IS({L}'_r).
				\end{align}
			\end{small}
			
			It demonstrates that $\triangle IS(s|L_r).$ in Equation.~\eqref{eq:marginal_score} is submodular.
		\end{itemize}
	\end{proof}
	Lemma~\ref{lemma:is} indicates that $\triangle IS(s|L_r) $ is monotone increasing and submodular.

	Suppose $IS(L_r^*)$ denotes the optimal value of objective for multiple round log annotation within budget $B_r$, then we can derive:
	\begin{equation} \label{eq:upperbound}
		IS(L_r^*) \leq IS(L_r) + B_r*\triangle IS(s|L_r)
	\end{equation}
	
	The Equation~\ref{eq:upperbound} provides the upper bound for $IS(L_r^*)$. By inductive hypothesis,
	\begin{equation} \label{eq:approximation}
		(1-(\frac{1}{B_r})^{B_r})IS(L_r^*) \leq IS(L_r)
	\end{equation}
	The fraction $(\frac{1}{B_r})^{B_r})$ approaches to $\frac{1}{e}$ as budget $B_r$ grows. Thus, the approximation ratio for multiple round log annotation is $1- \frac{1}{e}$.

\end{proof}

\subsection{Proof for Theorem~\ref{theo:adaptivedemon}} \label{appendix:adaptivedemon}

\begin{proof}
	We prove Theorem~\ref{theo:adaptivedemon} by reducing adaptive log demonstration
	selection problem to the Set-cover problem~\cite{setcover}.  Formally, one instance $I$ of the set cover problem is defined as: given a universe $U$, a family $S$ of subset of $U$ and the number $k$, the target is to minimize the size of a subset $C \subseteq S$ where union of $C$ is $U$. Here, we can transform $I$ into one instance $D_s$ of adaptive log demonstration selection problem: We set universe $U$ as words in log $s$, where each element $u \in U$ is a word $w_i^s \in s$. We set family $S$ as the word set $UW(D_s)$ in $D_s$, where $UW(D_s)$ is a subset containing words in universe $s$. Therefore, if we can find a minimum size of $D_s$, we can solve set cover problem.
	Since the  set cover problem is known as an NP-hard problem, adaptive context selection problem is also NP-hard, which completes the proof.
\end{proof}

\subsection{Proof for Theorem~\ref{theo:adaptivedemonselection}} \label{appendix:adaptivedemonselection}

\begin{proof}
	We first prove the following lemma. 
	\begin{myLem}\label{lemma:ws}
		The  $\triangle UW(s_i|D_s)$ in Algorithm~\ref{alg:adaptivedemon} is monotone  increasing and submodular.
	\end{myLem}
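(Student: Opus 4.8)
The plan is to expose the set function underlying Algorithm~\ref{alg:adaptivedemon} as a coverage function and then verify the two defining inequalities directly, mirroring the structure of the proof of Lemma~\ref{lemma:is}. Define $F(D):=|UW(D)|$ where $UW(D)=\bigcup_{s_i\in D}UW(s_i|s)$, so that the quantity of interest is the marginal gain $F(D_s\cup s_i)-F(D_s)=|\triangle UW(s_i|D_s)|$. The crucial structural observation is that, by lines 5--9, the matched word set $UW(s_i|s)$ depends \emph{only} on the candidate log $s_i$ and the input log $s$, never on the already-selected set $D_s$: line 7 picks, for each $w^s_j$, the single best-matching word of $s_i$, and the test in line 8 is evaluated in isolation. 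Hence $UW(D_s)$ is a union of fixed per-log sets, and $|\triangle UW(s_i|D_s)|=|UW(s_i|s)\setminus UW(D_s)|$ counts exactly the words newly contributed by $s_i$.

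First I would prove \textbf{monotone increasing}. Since $UW(D_s\cup s_i)=UW(D_s)\cup UW(s_i|s)\supseteq UW(D_s)$, adding a log can only enlarge the matched word set, so $F(D_s\cup s_i)\ge F(D_s)$ and $|\triangle UW(s_i|D_s)|\ge 0$ for every $s_i\in L$ and every $D_s$. Next I would prove \textbf{submodularity}. Take any $\tilde{D}\subseteq D$ with $s_i\notin D$. Monotonicity of the union gives $UW(\tilde{D})\subseteq UW(D)$, whence $UW(s_i|s)\setminus UW(D)\subseteq UW(s_i|s)\setminus UW(\tilde{D})$. Taking cardinalities yields
\begin{align}
	|\triangle UW(s_i|\tilde{D})|=|UW(s_i|s)\setminus UW(\tilde{D})|\ge |UW(s_i|s)\setminus UW(D)|=|\triangle UW(s_i|D)|,
\end{align}
which is precisely the diminishing-returns condition, establishing submodularity.

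There is no deep analytic obstacle here, as coverage functions are canonically monotone and submodular; the only point requiring care—and the step I would emphasize—is the structural claim that $UW(s_i|s)$ is independent of $D_s$, since this is what makes $UW(D_s)$ a genuine union of fixed sets rather than a context-dependent quantity. Once Lemma~\ref{lemma:ws} is in hand, the greedy selection in Algorithm~\ref{alg:adaptivedemon} is a greedy submodular set cover over the universe of words in $s$, so the standard guarantee of \cite{setcoverproof} delivers the $1+\ln(n)$ approximation ratio of Theorem~\ref{theo:adaptivedemonselection}, with $n=|s|$ the number of words to be covered.
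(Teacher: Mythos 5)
Your proof is correct and follows essentially the same route as the paper's: establish monotonicity from the fact that $UW(D_s\cup s_i)=UW(D_s)\cup UW(s_i|s)\supseteq UW(D_s)$, and submodularity from the set inclusion that makes the marginal contribution shrink as $D_s$ grows. Your version is in fact tighter than the paper's sketch because you make explicit the key structural fact the paper leaves implicit — that $UW(s_i|s)$ is computed independently of $D_s$ (lines 5--9 of Algorithm~\ref{alg:adaptivedemon}), so $UW(\cdot)$ is a genuine coverage function and the diminishing-returns inequality follows from $UW(s_i|s)\setminus UW(D)\subseteq UW(s_i|s)\setminus UW(\tilde{D})$.
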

	
	\begin{proof}
		We prove the $\triangle UW(s_i|D_s)$ is monotone and submodular as follows. 
		\begin{itemize}[leftmargin=10pt]
			\item {\textbf{Monotone increasing}: } 
			Given any $s_i \in U$ and selected demonstrations $D_s$, we can obtain $ UW( D_s \cup \{s_i\})-UW(D_s)\geq 0$, since similar words are non-negative for a new $s_i$ for selected demonstrations $D_s$.
			Thus, $\triangle UW(s_i|D_s)$  is
			monotone increasing.
			\item {\textbf{Submodularity}: }
			Given  ${\tilde{D}}'_s \subset {D}'_s$, and $s_i \in U$ and $s_i \notin {\tilde{D}}'_s, {D}'_s$, 
			Because words in $UW({D}'_s) \setminus UW(\tilde{D}'_s)$ may have intersection with words in $s_i$, the marginal score satisfies the inequality $ \triangle UW(s_i|{D}'_s) - \triangle UW(s_i|{\tilde{D}}'_s) \le 0$. Thus, we can obtain the inequality as:
			
			\begin{small}
				\begin{align}
					UW(\tilde{D}'_s \cup {s_i}) -  UW(\tilde{D}'_s) \ge  	UW({D}'_s \cup {s_i}) -  UW({D}'_s).
				\end{align}
			\end{small}
			
			It demonstrates that $\triangle UW(s_i|D_s)$ in Equation.~\eqref{eq:marginal_score} is submodular.
		\end{itemize}
	\end{proof}
	As $\triangle UW(s_i|D_s)$ is monotone and submodular, according to theorems in \cite{setcoverproof}, greedy solution over words coverage has approximation ratio of $1+ln(n)$.

\end{proof}

\begin{table*}[h]
	\centering

	\caption{Effectiveness (accuracy) over 16 log datasets on top of Qwen2.5-7B-Instruct. 
		We exclude DivLog since AdaICL is an augmented version of it. 
		The \textbf{bold number} indicates the best performance.
	}
	\label{tab:qwenmainexp}
	\begin{tabular}{c|ccc|ccc|ccc|ccc}
		\hline
		
		\multirow{2}{*}{\textbf{Dataset}} & \multicolumn{3}{c|}{\textbf{Drain}}                                   & \multicolumn{3}{c|}{\textbf{LogPPT}}   & \multicolumn{3}{c|}{\textbf{AdaICL}}                                   & \multicolumn{3}{c}{\textbf{LLMLog}}     \\
		& \multicolumn{1}{c}{\textbf{MLA}} & \multicolumn{1}{c}{\textbf{PTA}} &  \multicolumn{1}{c|}{\textbf{RTA}}  & \multicolumn{1}{c}{\textbf{MLA}} & \multicolumn{1}{c}{\textbf{PTA}} &  \multicolumn{1}{c|}{\textbf{RTA}} & \multicolumn{1}{c}{\textbf{MLA}} & \multicolumn{1}{c}{\textbf{PTA}} &  \multicolumn{1}{c|}{\textbf{RTA}}    & \multicolumn{1}{c}{\textbf{MLA}} & \multicolumn{1}{c}{\textbf{PTA}} &  \multicolumn{1}{c}{\textbf{RTA}}    \\ \cline{1-13}
		\textbf{Android} &73.0 & 56.6 &  \multicolumn{1}{c|}{62.0}  &76.7 & 58.4  &  \multicolumn{1}{c|}{68.4}   &94.7 & 78.2 &  \multicolumn{1}{c|}{84.8}  &\textbf{99.1} & \textbf{90.1}  &  \multicolumn{1}{c}{\textbf{93.9}}   \\ 
		\textbf{BGL} &44.4 & 33.9 &  \multicolumn{1}{c|}{30.8}  &97.0 & 68.6 &  \multicolumn{1}{c|}{78.3} &89.9 & 73.1 &  \multicolumn{1}{c|}{88.3}  &\textbf{97.2} & \textbf{84.5}  &  \multicolumn{1}{c}{\textbf{91.7}}   \\ 
		\textbf{Hadoop} &43.9 & 36.8  &  \multicolumn{1}{c|}{34.2}  &89.5 & 54.0  &  \multicolumn{1}{c|}{58.8}    &99.1 &92.2  &  \multicolumn{1}{c|}{93.9}  &\textbf{99.8} & \textbf{94.8}  &  \multicolumn{1}{c}{\textbf{97.4}}   \\ 
		\textbf{HDFS} &95.9 & 81.3  &  \multicolumn{1}{c|}{92.9}  &90.0 & 85.7  &  \multicolumn{1}{c|}{85.7}   &91.7 & 15.8 &  \multicolumn{1}{c|}{64.3}  &\textbf{100.0} & \textbf{100.0}  &  \multicolumn{1}{c}{\textbf{100.0}}  \\ 
		\textbf{Linux} &19.4 & 43.4 &  \multicolumn{1}{c|}{42.2}  &94.9 & 47.5  &  \multicolumn{1}{c|}{49.1}   &97.9 &90.0 &  \multicolumn{1}{c|}{91.5}  &\textbf{99.7} & \textbf{95.7}  &  \multicolumn{1}{c}{\textbf{94.9}}   \\ 
		\textbf{Mac} &27.2& 21.2  &  \multicolumn{1}{c|}{24.9}  &67.3& 43.6  &  \multicolumn{1}{c|}{53.4}   &78.2& 46.6  &  \multicolumn{1}{c|}{63.1}  &\textbf{87.7}& \textbf{58.5}  &  \multicolumn{1}{c}{\textbf{75.1}}   \\ 
		\textbf{Thunderbird} &19.1 & 29.9 &  \multicolumn{1}{c|}{36.9}  &92.6 & 50.6  &  \multicolumn{1}{c|}{59.1}  &94.8 &76.7&  \multicolumn{1}{c|}{84.6}  &\textbf{98.3} & \textbf{77.2}  &  \multicolumn{1}{c}{\textbf{85.2}}   \\ 
		\textbf{Zookeeper} &49.8 & 39.1 &  \multicolumn{1}{c|}{36.0}  &99.0 & 74.1 &  \multicolumn{1}{c|}{86.0}  &99.8 & 92.5  &  \multicolumn{1}{c|}{99.8}  &\textbf{100.0} & \textbf{100.0}  &  \multicolumn{1}{c}{\textbf{100.0}}    \\ 
		\textbf{HealthApp} &24.1 & 8.3 &  \multicolumn{1}{c|}{34.7}  &78.9 & 85.3  &  \multicolumn{1}{c|}{85.3}   &99.3 & 87.5 &  \multicolumn{1}{c|}{93.3}   &\textbf{100.0} & \textbf{100.0}  &  \multicolumn{1}{c}{\textbf{100.0}}  \\ 
		\textbf{Spark} &37.6 & 50.0 &  \multicolumn{1}{c|}{41.7}  &99.1 & 60.0 &  \multicolumn{1}{c|}{58.3} &99.9 & 97.2&  \multicolumn{1}{c|}{97.2}   &\textbf{100.0} & \textbf{100.0}  &  \multicolumn{1}{c}{\textbf{100.0}}   \\ 
		\textbf{Windows} &69.6 & 46.3  &  \multicolumn{1}{c|}{50.0}  &98.3 & 55.4  &  \multicolumn{1}{c|}{72.0}    &71.8 & 50.3  &  \multicolumn{1}{c|}{72.0}  &\textbf{99.5} & \textbf{98.3}  &  \multicolumn{1}{c}{\textbf{99.1}}   \\ 
		\textbf{OpenSSH} &53.4 & 52.0  &  \multicolumn{1}{c|}{50.0}  &97.6 & 48.9&  \multicolumn{1}{c|}{84.6}   &93.8 & 28.3  &  \multicolumn{1}{c|}{55.6}  &\textbf{100.0} & \textbf{100.0}  &  \multicolumn{1}{c}{\textbf{100.0}}   \\ 
		\textbf{OpenStack} &18.7 & 5.5 &  \multicolumn{1}{c|}{39.5}  &90.7 & 84.4  &  \multicolumn{1}{c|}{88.4}   &48.4& 20.7 &  \multicolumn{1}{c|}{48.8}   &\textbf{100.0} & \textbf{100.0}  &  \multicolumn{1}{c}{\textbf{100.0}}   \\ 
		\textbf{Proxifier} &52.7 & 26.9  &  \multicolumn{1}{c|}{87.5}  &\textbf{100.0} & \textbf{100.0}  &  \multicolumn{1}{c|}{\textbf{100.0}}   &34.7& 11.9  &  \multicolumn{1}{c|}{12.5}  &\textbf{100.0} & \textbf{100.0}  &  \multicolumn{1}{c}{\textbf{100.0}}   \\ 
		\textbf{HPC} &67.2 & 38.8 &  \multicolumn{1}{c|}{41.3}  &94.7 & 73.6  &  \multicolumn{1}{c|}{84.8}  &96.1 & 29.9&  \multicolumn{1}{c|}{69.5}  &\textbf{100.0} & \textbf{100.0}  &  \multicolumn{1}{c}{\textbf{100.0}}  \\ 
		\textbf{Apache} &\textbf{100.0} & \textbf{100.0}  &  \multicolumn{1}{c|}{\textbf{100.0}}   &99.4 & 83.3  &  \multicolumn{1}{c|}{83.3}  &\textbf{100.0} & \textbf{100.0}  &  \multicolumn{1}{c|}{\textbf{100.0}}  &\textbf{100.0} & \textbf{100.0}  &  \multicolumn{1}{c}{\textbf{100.0}}    \\ \hline
	\end{tabular}
\end{table*}

\begin{table*}[t]
	\centering
	\small
	\caption{Ablation study on Qwen2.5-7B-Instruct }
	\label{tab:qwenablation}
	\setlength\tabcolsep{1pt}
	\vspace{-1em}
	\renewcommand{\arraystretch}{1.5}
	\begin{tabular}{c|ccc|ccc|ccc|ccc}
		\hline
		\multirow{2}{*}{\diagbox{\textbf{Dataset}}{\textbf{Model}}} & \multicolumn{3}{c|}{\textbf{Mac}} & \multicolumn{3}{c|}{\textbf{BGL}} & \multicolumn{3}{c|}{\textbf{Hadoop}} & \multicolumn{3}{c}{\textbf{Proxifier}} \\ \cline{2-13}
		& \textbf{MLA}   & \textbf{PTA} &  \textbf{RTA}  & \textbf{MLA}   & \textbf{PTA} &  \textbf{RTA}  & \textbf{MLA}   & \textbf{PTA} &  \textbf{RTA}   & \textbf{MLA}   & \textbf{PTA} &  \textbf{RTA}    \\ \cline{1-13}
		
		\textbf{LLMLog\textbackslash SED}    & $63.8_{(-23.9)}$   & $37.6_{(-20.9)}$        & $58.1_{(-17.0)}$ & $88.5_{(-8.7)}$  & $47.3_{(-37.2)}$   & $67.5_{(-24.2)}$ & $91.6_{(-8.2)}$  & $85.0_{(-9.8)}$        & $89.5_{(-7.9)}$  & $97.1_{(-2.9)}$   & $3.6_{(-96.4)}$        &$25.0_{(-62.5)}$  \\
		\textbf{LLMLog\textbackslash RS}      & $86.3_{(-1.4)}$   & $54.3_{(-4.2)}$        & $70.9_{(-4.2}$ & $88.8_{(-8.4)}$   & $46.8_{(-37.7)}$        & $65.9_{(-25.8)}$ & $94.0_{(-5.8)}$   & $90.2_{(-4.6)}$        & $96.5_{(-0.9)}$ & $100.0_{(-0.0)}$   & $100.0_{(-0.0)}$       & $100.0_{(-0.0)}$ \\ 
		\textbf{LLMLog\textbackslash PC}  & $85.5_{(-2.2)}$   & $52.2_{(-6.3)}$        & $68.9_{(-6.2)}$  & $95.6_{(-1.6)}$   & $68.5_{(-16.0)}$        & $90.9_{(-0.8)}$ & $97.2_{(-2.6)}$   & $90.9_{(-3.9)}$        & $96.5_{(-0.9)}$ & $100.0_{(-0.0)}$   & $100.0_{(-0.0)}$       & $100.0_{(-0.0)}$\\
		\textbf{LLMLog\textbackslash AD}     & $68.9_{(-20.8)}$  & $26.4_{(-32.1)}$       & $51.9_{(-23.2)}$ & $70.7_{(-26.5)}$  & $16.6_{(-67.9)}$       & $50.0_{(-41.7)}$ & $58.8_{(-41.0)}$  & $15.5_{(-79.3)}$       & $64.9_{(-32.5)}$ & $20.2_{(-79.8)}$   & $9.6_{(-90.4)}$       & $12.5_{(-87.5)}$\\ 
		\textbf{LLMLog\textbackslash AB}       & $86.6_{(-1.1)}$  & $55.1_{(-3.4)}$       & $72.9_{(-2.2)}$ & $89.0_{(-8.2)}$  & $67.1_{(-17.4)}$       & $81.7_{(-10.0)}$ & $99.6_{(-0.2)}$   & $93.9_{(-0.9)}$       & $97.4_{(-0.0)}$ & $100.0_{(-0.0)}$   & $100.0_{(-0.0)}$       & $100.0_{(-0.0)}$\\  \hline
		\textbf{LLMLog}     & \textbf{87.7} & \textbf{58.5}        &\textbf{75.1} & \textbf{97.2} & \textbf{84.5}     &\textbf{91.7} & \textbf{99.8}  & \textbf{94.8}        &\textbf{97.4} & \textbf{100.0}  & \textbf{100.0}        & \textbf{100.0}\\ \hline
	\end{tabular}
\end{table*}

\subsection{Additional Experiments}

In this section, we firstly present the detailed description of included baselines. Then we present the effectiveness results and ablation study on Qwen2.5-7B-Instruct model.

\subsubsection{Detailed Baselines}
The baselines in experiment are listed as follows:
\begin{itemize}[leftmargin=10pt]
	\item \textbf{Drain} Drain is the most effective unsupervised template generation methods for logs. 
	Given a log, it process the log on a fixed tree structure, where each path on the tree stands for a heuristic and each leaf node represents a specific template.
	The template will be generated once the log reaches a leaf node.

	\item \textbf{LogPPT} 
	LogPPT is the state-of-the-art (SOTA) ML-based method for the template generation task.
	It retrieves several of the most representative logs by sampling.
	Then, it fine-tunes a pre-trained language model to predict keyword labels in the sampled training set.
	After training, the fine-tuned language model can infer templates for each log.

	\item \textbf{DivLog} DivLog~\cite{Divlog} is the first DivLog~\cite{Divlog} is the first LLM-based method for the template generation task.
	It applies the DPP~\cite{dpp} algorithm to select diverse logs for annotation.
	For each unlabeled log, it retrieves the top-$k_c$ most similar labeled logs using the cosine similarity of embeddings as demonstrative contexts, which are fed into the LLM to guide template generation.
	
	\item \textbf{AdaICL} 
We combine DivLog~\cite{Divlog} with the existing multi-round annotation framework, AdaICL~\cite{adaicl}.
Specifically, in each round, AdaICL~\cite{adaicl} selects labeled logs by retrieving the top-$k_n$ most similar logs for each unlabeled log based on the representativeness score.
It then iteratively finds the log set that maximizes the representativeness score within a pre-defined budget.
For the in-context learning (ICL) component, the baseline maintains the same settings as DivLog~\cite{Divlog}.
\end{itemize}

\subsubsection{Effectiveness Evaluation on Qwen2.5-7B-Instruct}  \label{appendix:effective_exp}
We report the MLA, PTA, and RTA metrics on 16 datasets using the Qwen2.5-7B-Instruct model.
Because LLMs lack the ability to correctly generate templates without in-context learning (ICL), we exclude the pure LLM baseline from our evaluation.
Since AdaICL~\cite{adaicl} is an extension of DivLog~\cite{Divlog} with better accuracy, we include only AdaICL~\cite{adaicl} as the in-context learning baseline.
The results in Table~\ref{tab:qwenmainexp} illustrate the robustness of the proposed components across multiple types of LLMs.
As Qwen2.5-7B-Instruct has fewer parameters than GPT-4o, the overall performance of both LLMLog and AdaICL is downgraded.
On datasets with limited budgets, the accuracy of AdaICL degrades severely, and even the neural network-based method, LogPPT, outperforms it.
However, our model remains accurate across all 16 datasets, achieving $100\%$ accuracy on several datasets such as HDFS, Proxifier, HPC, and Apache under a limited budget. This is because the word-based annotation and demonstration selection provide clearer instructions for the LLM to process each log precisely.

\subsubsection{Ablation Study on Qwen2.5-7B-Instruct}\label{appendix:ablation}
For further verifying our analysis, we also summarize experiments on top of Qwen2.5-7B-Instruct model in Table~\ref{tab:qwenablation}. 
The variants and datasets maintain the setting in Table~\ref{tab:ablation}.
The results on QWEN show the similar trend for SED, representative score, LLM confidence score and adaptive budget strategy. 
However, the removal of adaptive context strategy severely affects the performance of LLMLog. 
Since Qwen2.5-7B-Instruct has fewer parameters than GPT-4o, simply piling fixed top-$k_c$ similar example logs may not provide enough contexts, illustrating that our adaptive demonstration strategy is more robust on smaller LLMs. 
This point can be also supported by the more significant decrease of PTA, resulting from the falsely processed words. 
As there are only 4 templates in Proxifier, the wrong prediction for several logs results in large perturbation of template level accuracy, PTA and RTA.
 
\end{document}